\documentclass[letterpaper, 10 pt, conference]{ieeeconf}  
\IEEEoverridecommandlockouts                              
\let\labelindent\relax

\usepackage{booktabs}
\usepackage{multirow}

\usepackage{cite}

\usepackage[utf8]{inputenc}
\usepackage{graphics} 
\usepackage{amsmath} 
\usepackage{amssymb}  
\usepackage{amsfonts}
\usepackage{comment}
\usepackage{enumitem}
\usepackage{cases,xspace,enumitem}

\usepackage{version}
\includeversion{tac}
\excludeversion{arxiv}

\usepackage{tikz}
\usetikzlibrary{positioning}

\usepackage[prependcaption,colorinlistoftodos]{todonotes}

\definecolor{gnred}{RGB}{255,91,89}

\definecolor{gnred1}{RGB}{71,0,0} 
\definecolor{gnred2}{RGB}{117,0,0} 
\definecolor{gnred3}{RGB}{164,0,0} 
\definecolor{gnred4}{RGB}{211,0,0} 
\definecolor{gnred5}{RGB}{255,0,0} 
\definecolor{gnred6}{RGB}{255,42,34} 
\definecolor{gnred7}{RGB}{255,91,89} 

\definecolor{gnblue1}{RGB}{0,36,71}   
\definecolor{gnblue2}{RGB}{0,60,118}  
\definecolor{gnblue3}{RGB}{0,85,164}
\definecolor{gnblue4}{RGB}{0,108,212}
\definecolor{gnblue4}{RGB}{0,108,212}
\definecolor{gnblue5}{RGB}{0,133,255}  
\definecolor{gnblue6}{RGB}{35,156,255} 
\definecolor{gnblue7}{RGB}{88,177,255} 

\definecolor{gnbrown1}{RGB}{71,27,0}  
\definecolor{gnbrown2}{RGB}{117,45,0} 
\definecolor{gnbrown3}{RGB}{164,62,0} 
\definecolor{gnbrown4}{RGB}{211,80,0} 
\definecolor{gnbrown5}{RGB}{255,97,0} 
\definecolor{gnbrown6}{RGB}{255,127,26} 
\definecolor{gnbrown7}{RGB}{255,155,86} 

\renewcommand{\theenumi}{(\roman{enumi})}

\newcommand\Item[1][]{%
  \ifx\relax#1\relax  \item \else \item[#1] \fi
  \abovedisplayskip=0pt\abovedisplayshortskip=0pt~\vspace*{-\baselineskip}}

\usepackage{hyperref}
\hypersetup{
    colorlinks=true,
    linkcolor=blue,
    filecolor=magenta,      
    urlcolor=cyan,
    pdftitle={},
    pdfpagemode=FullScreen,
    }

\newcommand{\e}{\mathrm{e}}

\usepackage{amsthm}

\newtheoremstyle{ieeeconf}
  {0pt}   
  {0pt}   
  {\normalfont}  
  {\parindent}       
  {\itshape} 
  {:}         
  { } 
  {\thmname{#1} \thmnumber{#2}\thmnote{ (#3)}} 
\makeatletter
\renewenvironment{proof}[1][\proofname]{\par
  \pushQED{\qed}%
  \normalfont \topsep\z@
  \trivlist
  \item[\hskip2em
        \itshape
    #1\@addpunct{:}]\ignorespaces
}{%
  \popQED\endtrivlist\@endpefalse
}
\makeatletter

\theoremstyle{ieeeconf}

\newtheorem{defn}{Definition}
\newtheorem{thm}{Theorem}
\newtheorem{lemma}[thm]{Lemma}
\newtheorem{pro}[thm]{Proposition}
\newtheorem{cor}[thm]{Corollary}
\newtheorem{remark}[thm]{Remark}

\newcommand{\R}{\mathbb{R}}
\newcommand{\real}{\mathbb{R}}

\newcommand{\realnonnegative}{\mathbb{R}_{\geq 0}}

\newcommand{\norm}[2]{\|#1\|_{#2}}

\newcommand{\diag}[1]{\operatorname{diag}(#1)}

\newcommand{\map}[3]{#1 \colon #2 \rightarrow #3}

\newcommand{\Lip}{\operatorname{\mathsf{Lip}}}

\newcommand{\mcM}{\mathcal{M}}
\newcommand{\mcN}{\mathcal{N}}

\newcommand{\mcT}{\mathcal{T}}

\newcommand{\mcW}{\mathcal{W}}
\newcommand{\mcX}{\mathcal{X}}
\newcommand{\mcY}{\mathcal{Y}}

\newcommand{\xstar}{x^{\star}}
\newcommand{\ustar}{u^{\star}}

\newcommand{\0}{\mbox{\fontencoding{U}\fontfamily{bbold}\selectfont0}}

\newcommand{\fzero}{\0}

\title{Contractivity of Neural Networks:  \\
Classification, Integral Control and Learning}

\title{Contracting  Neural Networks: \\ Sharp LMI Characterizations for Integral Control and Deep Learning}

\title{Contracting  Neural Networks:  Sharp LMI \textcolor{black}{Conditions} \\with Applications to Integral Control and Deep Learning}

\author{Anand Gokhale, Anton V. Proskurnikov, Yu Kawano, Francesco Bullo \thanks{ This work was in part supported by AFOSR project FA9550-22-1-0059 and by JST FOREST Program Grant Number JPMJFR222E. Anand Gokhale and Francesco Bullo are with
    the Center for Control, Dynamical Systems, and Computation, UC Santa Barbara, Santa Barbara, CA 93106 USA. email:  {\tt\small \{anand\_gokhale,bullo\}@ucsb.edu}. \newline
    Anton V. Proskurnikov is with Department of Electronics and Telecommunications, Politecnico di Torino, Italy, 10129.  email: {\tt\small anton.p.1982@ieee.org} \newline
    Yu Kawano is with the Graduate School of Advanced Science and Engineering, Hiroshima University, Higashi-Hiroshima 739-8527, Japan. email: {\tt\small ykawano@hiroshima-u.ac.jp}  \newline
    Anand Gokhale thanks Alexander Davydov for fruitful discussions.
    }}

\begin{document}

\maketitle

\thispagestyle{empty}
\pagestyle{empty}

\begin{abstract}
This paper studies \textcolor{black}{contractivity of}  
firing-rate and Hopfield recurrent neural networks.   
\textcolor{black}{We derive sharp LMI conditions on the synaptic matrices that characterize contractivity of both architectures, for activation functions} that are either non-expansive or monotone non-expansive, in both continuous and discrete time. We establish structural relationships among these conditions, including connections to Schur diagonal stability \textcolor{black}{and the recovery of optimal contraction rates for symmetric synaptic matrices.}
We demonstrate the utility of these results through two applications. First, we \textcolor{black}{develop an} 
LMI-based design procedure for low-gain integral controllers enabling reference tracking in contracting firing rate networks. Second, we provide an exact parameterization of 
\textcolor{black}{weight matrices that guarantee contraction and use it}
to improve the expressivity of Implicit Neural Networks, achieving competitive performance on image classification benchmarks with fewer parameters.
\end{abstract}
\section{Introduction}
\paragraph{Motivation}
Recurrent neural networks (RNNs) are a foundational architecture in machine learning. Unlike heavily parameterized models such as transformers, RNNs maintain 
\textcolor{black}{low computational and memory requirements}, 
making them well suited for deployment in resource-constrained edge applications and real-time data-driven control~\cite{WDA-ALB-MF:24, WDA-ALB-MF:25}. Furthermore, their structure naturally mirrors biological neural circuits, establishing them as a biologically plausible model in computational neuroscience~\cite{CJR-DHJ-RGB-BAO:08}. Beyond traditional sequence processing, RNNs have recently \textcolor{black}{become central} 
for implicit deep learning; in architectures such as Deep Equilibrium Models (DEQs)~\cite{EW-JZK:20}, 
\textcolor{black}{a cascade of layers is}
replaced by the steady-state equilibrium of an RNN, directly linking model evaluation to the network's asymptotic dynamics.
Continuous-time versions of RNNs are particularly well positioned to exploit these advantages. In hardware implementations, continuous-time dynamics offer 
\textcolor{black}{faster processing and lower power consumption, motivating long-standing}
interest in analog circuit design~\cite{DWT-JJH:84}. 

Across \textcolor{black}{these} domains, reliable implementation requires establishing stability and robustness guarantees to prevent divergent behaviors and ensure safe operation \textcolor{black}{in the presence of noise and model uncertainties.} 
To this end, we employ contraction theory~\cite{FB:26-CTDS} as the primary mathematical framework for certifying robust stability of RNNs.

A key challenge in applying contraction theory to neural networks is the identification of sharp, computationally tractable conditions that guarantee contraction. Recent works~\cite{LEG-FG-BT-AA-AYT:21, WDA-ALB-MF:24} derive stability conditions for globally non-expansive activation functions. However, the most widely used activation functions -- such as ReLU, tanh, and sigmoid -- \textcolor{black}{are not merely non-expansive but also monotone non-decreasing. Neglecting this additional structure may result in}  overly conservative stability guarantees. 
Accordingly, we seek to characterize the maximal set of synaptic weight matrices that render RNNs contracting for such practically important classes of nonlinearities. 

These sharp contractivity guarantees unlock advanced capabilities in both control design and machine learning. In control design, contractivity of the RNN 
\textcolor{black}{allows it to be treated as a fast subsystem.}
Motivated by~\cite{JWSP:21}, we use singular perturbation arguments to develop an LMI-based design strategy for low-gain integral controllers. In machine learning, enlarging the \textcolor{black}{admissible space of weight matrices}
directly improves model expressivity. Furthermore, recent work~\cite{JL-LD-SO-WY:25} shows that the representational capacity of implicit models is visibly enhanced by allowing the fixed point of the implicit layer to be locally, rather than globally, Lipschitz in its input. 
\textcolor{black}{We derive an explicit parameterization of the matrices satisfying our sharp contractivity conditions, which enables
the design of highly expressive implicit networks that are contracting yet are locally Lipschitz in the input.}

\paragraph{Relevant Literature}

Contraction theory~\cite{FB:26-CTDS} provides a powerful framework \textcolor{black}{for the control of nonlinear systems and optimization}, guaranteeing exponential convergence, periodic entrainment, and robustness to noise. 
A contraction-based analysis of continuous time neural networks in the $\ell_1$ and $\ell_\infty$ norms has been presented in~\cite{AD-SJ-FB:20o}, and the sharpest known conditions in the Euclidean norm are limited to the symmetric case~\cite{VC-AG-AD-GR-FB:23c}. A parameterization of discrete time robust RNNs is presented in~\cite{MR-RW-IRM:21}.

Control design for discrete time RNNs has been previously studied in the context of asymptotic stability~\cite{AN-MSH-RDB:22} and incremental ISS properties~\cite{WDA-ALB-MF:24}. Structurally, RNNs can be considered a subclass of Lur'e-type systems, i.e., feedback interconnections of LTI blocks and static nonlinear maps. However, unlike standard Lur'e-type models, RNNs involve external inputs or outputs that may enter or exit the system through a nonlinearity. Lur'e-type systems where the inputs and outputs are linear have been studied in~\cite{VA-ST:19, MG-VA-ST-DA:23} under various assumptions, with~\cite{MG-VA-ST-DA:23} studying integral control of such systems. Here, we study the nonlinear case, and specifically design a reference tracking controller using tools from singular perturbation theory~\cite{JWSP:21}.

\paragraph{Contributions}

This paper studies stability conditions for \textcolor{black}{continuous-time and discrete-time}
firing-rate neural networks (FRNNs) and Hopfield neural networks (HNNs) \textcolor{black}{with non-expansive or monotone non-expansive activation functions,
using the framework of contraction theory}.

First, we establish general conditions for the absolute contractivity of Lur'e systems in Lemma~\ref{lem:absolute_contractivity_lure}. Utilizing this general result, we derive novel sharp LMI conditions that guarantee the contractivity of FRNNs and HNNs across popular classes of activation functions, summarizing these comprehensively in Table~\ref{tab:lmi_conditions}.

Second, we perform a rigorous structural analysis of the conditions presented in Table~\ref{tab:lmi_conditions}. In Theorem~\ref{thm:structural_relationships}, we establish that the set of weight matrices guaranteeing contraction in discrete time is a  subset of those guaranteeing contraction in continuous time. Furthermore, while the literature often assumes component-wise non-expansive nonlinearities~\cite{LEG-FG-BT-AA-AYT:21}, popular activation functions like ReLU, tanh, and sigmoid are also non-decreasing. We show that exploiting this additional constraint yields a strictly more expressive set of allowable weight matrices. Finally, we establish the explicit relationship between our FRNN and HNN conditions. Additionally we demonstrate the tightness of our inequalities by reducing them to Schur diagonal stability~\cite{WDA-ALB-MF:24} in discrete time, recover state-of-the-art guarantees for symmetric weights~\cite{VC-AG-AD-GR-FB:23c}, and connect our results to Lyapunov Diagonal Stability.
 
Third, we demonstrate the practical utility of these conditions through two distinct applications. In the realm of control design, we provide a novel LMI-based procedure that enables robust reference tracking in contracting FRNNs via low-gain integral control; a result tantamount to the DC gain condition for linear systems. We validate this approach numerically on a learned two-tank system. In the context of deep learning, we derive an exact algebraic parameterization of the complete set of weight matrices satisfying our contractivity conditions. We apply this parameterization to improve the expressivity of DEQs; specifically, we utilize input-dependent networks to predict the free variables of our parameterization. This safely enhances the expressivity of the implicit layer while remaining mathematically guaranteed to be contracting, demonstrating strong performance on MNIST and CIFAR-10 image classification tasks.

The rest of this paper is organized as follows: We introduce some preliminaries in Section~\ref{sec:background}, and then provide a general result on Lur'e systems in Section~\ref{sec:lure}. We provide an analysis of the conditions for contractivity of FRNNs and HNNs in Section~\ref{sec:analysis_frnn_hnn}. We study both applications in Section~\ref{sec:applications}.

\section{Background}\label{sec:background}

\subsection{Notation}
We let $\0_{n \times m}$ be an $n \times m$ all zero matrix, $I_n$ be the $n \times n$ identity matrix. For symmetric $A, B \in \R^{n \times n}, A \preceq B$ means that $B - A$ is positive definite. We let $\norm{\cdot}{}$ be a norm on $\R^n$ and its corresponding matrix norm on $\R^{n \times n}$. Given a symmetric positive definite matrix $P \in \R^{n\times n}$, we let $\norm{\cdot}{P}$ be the $P-$weighted Euclidean norm $\norm{x}{P}:= \sqrt{x^\top P x}\;, x \in \R^n$. Given two normed spaces $\mcX, \mcY$, a map $\map{F}{\mcX}{\mcY}$ is said to be Lipschitz from $(\mcX, \norm{\cdot}{\mcX})$ to  $(\mcY, \norm{\cdot}{\mcY})$  with constant $\rho \geq 0$, if for all $x_1, x_2 \in \mcX$, $\norm{F(x_1) - F(x_2)}{\mcY} \leq \rho \norm{x_1 - x_2}{\mcX}$. We let $\diag{A_1, A_2, \cdots, A_n}$ denote a block diagonal matrix when $A_i$ is a square matrix, and a diagonal matrix when $A_i$ is scalar.

\subsection{Contraction theory}

Let the vector field $\map{F}{\realnonnegative \times \R^n}{\R^n}$ be continuous in its first argument and Lipschitz in its second, with $(t,x)\to F(t,x)$. We say that $F$ is strongly infinitesimally contracting with respect to $\norm{\cdot}{P}$  with  rate $c > 0$ if for all $x_1, x_2 \in \R^n$, and $t \geq 0$,
\begin{align}
    (F(t, x_1) - F(t, x_2))^\top P (x_1 - x_2) &\leq -c\norm{x_1 - x_2}{P}^2.
    \label{eq:contractivity_condition} 
\end{align}
If $x(t)$ and $y(t)$ are two trajectories of the system $\dot{x} = F(t,x)$, then $\norm{x(t) - y(t)}{P} \leq \e^{-c(t- t_0)}\norm{x(t_0) - y(t_0)}{P}$, for all $t \geq t_0 \geq 0$. In discrete time, the system $x^+ = F(t,x)$ is strongly contracting with constant $\rho$ if $\Lip_x(F) \leq \rho < 1$. We refer to~\cite{FB:26-CTDS} for a recent review of contraction theory.

\subsection{Matrix inequalities for system analysis}

For our analysis, we utilize incremental matrix multipliers constraints (IMMs)~\cite{MF-AR-HH-MM-GJP:19} paired with the S-lemma.

\begin{defn}[Incremental multiplier matrix]
    Consider a function $\map{\Psi}{\R^n}{\R^m}$, and a symmetric matrix $M \in \R^{(n+m)\times (n+m)}$. The function $\Psi$ is said to admit the incremental multiplier matrix $M$, if, for any $x,y \in \R^n$, 
    \begin{align}
    \begin{bmatrix}
        x - y \\
        \Psi(x) - \Psi(y)
    \end{bmatrix}^\top
    M
    \begin{bmatrix}
        x - y \\
        \Psi(x) - \Psi(y)
    \end{bmatrix}
    \ge 0. \label{eq:imm-d}
\end{align}
\end{defn}
\begin{lemma}\label{lem:IMM_conditions}
    Let $Q \in \R^{n \times n}$ be a 
    positive diagonal matrix. If an elementwise nonlinearity $\map{\Psi}{\R^n}{\R^n}$ is slope-restricted in $[k_1, k_2]$ (where $k_1 \leq k_2$), it admits the multiplier matrix:
    \begin{align*}
        M = \begin{bmatrix}
            -2k_1k_2 Q & (k_1 + k_2) Q \\ 
            (k_1 + k_2) Q  & -2Q
        \end{bmatrix}.
    \end{align*}
\end{lemma}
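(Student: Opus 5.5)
The plan is to reduce the matrix inequality to a sum of scalar inequalities, one per coordinate, and to verify each scalar inequality directly from the slope restriction. Fix $x,y\in\R^n$ and write $\delta = x-y$ and $\eta = \Psi(x)-\Psi(y)$. Because $\Psi$ is elementwise, $\Psi(x)_i = \psi_i(x_i)$ for scalar functions $\psi_i$, each slope-restricted in $[k_1,k_2]$. Write $Q=\diag{q_1,\dots,q_n}$ with $q_i>0$. Every block of $M$ is a multiple of the diagonal matrix $Q$, so the quadratic form decouples coordinatewise:
\begin{align*}
\begin{bmatrix}\delta\\ \eta\end{bmatrix}^\top M \begin{bmatrix}\delta\\ \eta\end{bmatrix}
= \sum_{i=1}^n q_i\bigl(-2k_1k_2\delta_i^2 + 2(k_1+k_2)\delta_i\eta_i - 2\eta_i^2\bigr)
= -2\sum_{i=1}^n q_i(\eta_i - k_1\delta_i)(\eta_i - k_2\delta_i).
\end{align*}
The second equality is the factorization $(\eta-k_1\delta)(\eta-k_2\delta)=\eta^2-(k_1+k_2)\delta\eta+k_1k_2\delta^2$, which is checked by expanding.

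Since each $q_i>0$, it then suffices to show that $(\eta_i - k_1\delta_i)(\eta_i-k_2\delta_i)\le 0$ for every $i$. If $\delta_i=0$, then $x_i=y_i$, so $\eta_i=0$ and the product vanishes. If $\delta_i\neq 0$, the slope restriction says that the difference quotient $s_i=\eta_i/\delta_i$ lies in $[k_1,k_2]$. The product then equals $\delta_i^2(s_i-k_1)(s_i-k_2)$. This is nonpositive because $s_i-k_1\ge 0$ and $s_i-k_2\le 0$.

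Summing over $i$ with the positive weights $q_i$ and multiplying by $-2$ makes the whole form nonnegative, which is exactly \eqref{eq:imm-d}. No step here is a real obstacle. The only care needed is to read "slope-restricted in $[k_1,k_2]$" as the difference-quotient condition $k_1\le (\psi_i(a)-\psi_i(b))/(a-b)\le k_2$ for $a\ne b$. Under that definition the argument needs no differentiability, and it also covers the degenerate case $k_1=k_2$.
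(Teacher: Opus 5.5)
Your proof is correct and complete, including the $\delta_i=0$ case and the degenerate case $k_1=k_2$. The paper gives no proof of this lemma and treats it as a standard incremental-multiplier fact from the cited literature. Your argument (decoupling the quadratic form coordinatewise and summing the sector inequalities $(\eta_i - k_1\delta_i)(\eta_i - k_2\delta_i)\le 0$ with weights $q_i>0$) is exactly the standard proof of that fact.
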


In the context of neural networks, we define two classes of elementwise nonlinearities based on slope restrictions.

\begin{defn}[CONE and MONE nonlinearities]\label{def:cone_mone}
    An elementwise nonlinearity $\map{\Psi}{\R^n}{\R^n}$ is said to be:
    \begin{enumerate}
        \item {component-wise non-expansive (CONE)} if it is slope-restricted in $[-1, 1]$, and
        \item {monotonically non-decreasing and non-expansive (MONE)} if it is slope-restricted in $[0, 1]$.
    \end{enumerate}
\end{defn}

By applying Lemma~\ref{lem:IMM_conditions} to the bounds defined in Definition~\ref{def:cone_mone}, we obtain the following standard multipliers~\cite[E3.25]{FB:26-CTDS}.

\begin{cor}\label{cor:CONE_MONE_IQC}
  For any positive diagonal matrix $Q \in \R^{n \times n}$, the matrices
  \begin{equation*}
    M_{\textup{CONE}} = \begin{bmatrix}
      Q & \fzero_{n \times n} \\
      \fzero_{n \times n} & -Q
    \end{bmatrix} \enspace\text{and}\enspace
    M_{\textup{MONE}} = \begin{bmatrix}
      \fzero_{n \times n} & Q \\
      Q & -2Q
    \end{bmatrix}
  \end{equation*}
  are incremental multiplier matrices for CONE and MONE
  nonlinearities, respectively.
\end{cor}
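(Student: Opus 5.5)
This follows by direct substitution into Lemma~\ref{lem:IMM_conditions}, which we take as given. That lemma supplies, for any positive diagonal $Q$ and any elementwise nonlinearity slope-restricted in $[k_1,k_2]$, the multiplier
\[
\begin{bmatrix} -2k_1k_2 Q & (k_1+k_2)Q \\ (k_1+k_2)Q & -2Q \end{bmatrix}.
\]
By Definition~\ref{def:cone_mone}, a CONE nonlinearity is slope-restricted in $[-1,1]$ and a MONE nonlinearity in $[0,1]$. It therefore suffices to plug in these two pairs of bounds and check that, up to a positive scaling, we recover the stated matrices.

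\emph{CONE case.} With $k_1=-1$, $k_2=1$ we get $-2k_1k_2=2$ and $k_1+k_2=0$. The multiplier becomes $\diag{2Q,-2Q} = 2M_{\textup{CONE}}$. Since~\eqref{eq:imm-d} is a nonnegativity condition on a quadratic form, dividing by $2$ preserves it. Equivalently, we may apply Lemma~\ref{lem:IMM_conditions} with $Q/2$, which is still positive diagonal. Either way $M_{\textup{CONE}}$ is an incremental multiplier matrix.

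\emph{MONE case.} With $k_1=0$, $k_2=1$ we get $-2k_1k_2=0$ and $k_1+k_2=1$. This yields exactly $\begin{bmatrix}\fzero & Q\\ Q & -2Q\end{bmatrix}=M_{\textup{MONE}}$, with no rescaling needed.

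As a sanity check independent of the lemma, note that all matrices involved are block-diagonal per coordinate. For each coordinate $i$, write $\delta_i=x_i-y_i$ and $\eta_i=\Psi_i(x_i)-\Psi_i(y_i)$, and let $q_i>0$ be the $i$-th diagonal entry of $Q$. The quadratic form then splits as $\sum_i q_i\,\phi(\delta_i,\eta_i)$:
\begin{itemize}
\item for CONE, $\phi(\delta,\eta)=\delta^2-\eta^2\ge 0$ because $|\eta|\le|\delta|$;
\item for MONE, $\phi(\delta,\eta)=2\eta(\delta-\eta)\ge 0$ because $\eta=s\delta$ with $s\in[0,1]$, so $\eta(\delta-\eta)=s(1-s)\delta^2$.
\end{itemize}

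There is no real obstacle here. The only point needing care is the factor $2$ in the CONE case, which is absorbed by the positive scaling of $Q$.
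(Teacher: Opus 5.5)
Your proposal is correct and takes the same route as the paper, which obtains these multipliers by substituting the CONE bounds $[-1,1]$ and the MONE bounds $[0,1]$ into Lemma~\ref{lem:IMM_conditions}. Your explicit handling of the factor $2$ in the CONE case (rescaling $Q$ to $Q/2$) and your coordinatewise direct check fill in details that the paper leaves implicit.
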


\section{Contractivity of Lur'e systems} \label{sec:lure}
In our work, we consider both the continuous and discrete time Lur'e systems. These are presented below, 
\begin{align}
    \dot x &= Ax + B\Psi(Cx), \quad \textrm{and } \label{eq:cts_lure} \\
    x^+ &= Ax + B\Psi(Cx) \label{eq:disc_lure}
\end{align}
Here $x \in \R^n$ is the state, matrices $A \in \R^{n \times n}$, $B \in \R^{n \times m}$, $C \in \R^{m \times n}$ and $\map{\Psi}{\R^m}{\R^m}$ is a nonlinearity. We begin by presenting a general result on the absolute contractivity of Lur'e systems. The term absolute contractivity is similar to the concept of absolute stability~\cite{HKK:02}, and is the property that a  system is contracting for each nonlinearity $\Psi$ obeying a certain constraint.  The continuous time version of the following lemma is presented in~\cite{LDA-MC:13}, we extend it to the discrete time case. 

\begin{lemma}[Absolute contractivity of Lur'e systems] \label{lem:absolute_contractivity_lure}
    Consider the continuous time Lur'e system~\eqref{eq:cts_lure} and let $P \in \R^{n\times n}$ be positive definite. Let $\Psi$ admit an incremental matrix multiplier $M \in \R^{2m \times 2m}$.  Let $\Gamma$ be a block diagonal matrix, whose diagonal entries are $C$ and $I_m$. The system~\eqref{eq:cts_lure} is strongly infinitesimally contracting with respect to $\norm{\cdot}{P}$ with rate $c > 0$ if there exists a $\lambda \geq 0$ such that
    \begin{equation}\label{eq:cts_lure_lmi}
      \begin{bmatrix}
        PA+A^\top P + 2c P & PB \\    B^\top P & \0_{m\times{m}}
      \end{bmatrix} + \lambda
      \Gamma^\top M\Gamma \preceq 0.
    \end{equation}    
    Further, the discrete time system~\eqref{eq:disc_lure} is strongly contracting with respect to $\norm{\cdot}{P}$ with a constant $\rho \in [0, 1)$ if there exists a $\lambda \geq 0$ such that 
    \begin{equation}\label{eq:disc_lure_lmi}
      \begin{bmatrix}
        A^\top P A - \rho^2 P &  A^\top P B  \\
        B^\top P A &  B^\top P B
      \end{bmatrix} + \lambda
      \Gamma^\top M\Gamma \preceq 0.
    \end{equation}  
\end{lemma}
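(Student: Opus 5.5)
The plan is the standard S-procedure argument applied to increments. Fix two points $x_1,x_2\in\R^n$ and write $\delta x := x_1-x_2$ and $\delta\psi := \Psi(Cx_1)-\Psi(Cx_2)\in\R^m$. Form the stacked vector $\xi := [\delta x^\top\ \delta\psi^\top]^\top\in\R^{n+m}$. Then $\Gamma\xi = [(C\delta x)^\top\ \delta\psi^\top]^\top$, and its two blocks are exactly the input increment $Cx_1-Cx_2$ and the output increment of $\Psi$. Since $\Psi$ admits the multiplier $M$, definition~\eqref{eq:imm-d} applied at the points $Cx_1,Cx_2$ gives $\xi^\top\Gamma^\top M\Gamma\xi\ge 0$. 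This single scalar inequality is the only information about the nonlinearity the proof will use.

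\emph{Continuous time.} Let $F(x)=Ax+B\Psi(Cx)$, so that $F(x_1)-F(x_2)=A\delta x+B\delta\psi$. Multiplying~\eqref{eq:cts_lure_lmi} on the left by $\xi^\top$ and on the right by $\xi$ gives
\begin{equation*}
\delta x^\top(PA+A^\top P+2cP)\delta x+2\delta x^\top PB\delta\psi+\lambda\,\xi^\top\Gamma^\top M\Gamma\xi\le 0 .
\end{equation*}
The first two terms equal $2(F(x_1)-F(x_2))^\top P\delta x+2c\|\delta x\|_P^2$. The last term is nonnegative because $\lambda\ge 0$, so dropping it preserves the inequality. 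Dividing by $2$ then yields~\eqref{eq:contractivity_condition} with rate $c$, which is strong infinitesimal contractivity in $\|\cdot\|_P$.

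\emph{Discrete time.} Multiply~\eqref{eq:disc_lure_lmi} by $\xi$ on both sides in the same way. The quadratic form of the first block matrix is
\begin{equation*}
(A\delta x+B\delta\psi)^\top P(A\delta x+B\delta\psi)-\rho^2\delta x^\top P\delta x ,
\end{equation*}
that is, $\|F(x_1)-F(x_2)\|_P^2-\rho^2\|x_1-x_2\|_P^2$. Discarding the nonnegative multiplier term as before gives $\|F(x_1)-F(x_2)\|_P\le\rho\|x_1-x_2\|_P$. Hence $\Lip_x(F)\le\rho<1$ in the $P$-norm, which is strong contractivity with constant $\rho$.

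There is no substantive obstacle here. The one point needing care is the bookkeeping behind the identity for $\Gamma\xi$: $\Gamma=\diag{C,I_m}$ must map $\xi$ to the correct pair of arguments for~\eqref{eq:imm-d}, and the dimensions must match, with $M\in\R^{2m\times 2m}$ and $\Gamma\in\R^{2m\times(n+m)}$. Once that identity is in place, both cases follow by sandwiching the LMI with $\xi$ and using $\lambda\ge0$. Note that the argument only proves sufficiency; no converse (lossless S-lemma) is claimed or needed.
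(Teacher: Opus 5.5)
Your proposal is correct and follows essentially the same route as the paper: you rewrite the multiplier condition as $\xi^\top\Gamma^\top M\Gamma\xi\ge 0$, express each contraction requirement as a quadratic form in $\xi$, and combine the two using $\lambda\ge 0$. The paper calls this last step an ``application of the S-lemma''; only its trivial sufficiency direction is used, and your sandwich-and-drop argument carries it out explicitly.
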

\begin{proof}
  Given ${x_1,x_2\in\R^n}$, adopt the shorthands ${\Delta x =
  x_1-x_2\in\real^n}$, ${\Delta y = y_1-y_2 = C\Delta x\in\real^m}$, and ${\Delta
  \Psi = \Psi(y_1)-\Psi(y_2)\in\real^m}$. Since
  \begin{equation*}
    \begin{bmatrix} \Delta y \\ \Delta \Psi \end{bmatrix}
    =
    \begin{bmatrix} C \Delta x  \\ \Delta \Psi \end{bmatrix}
    =
    \begin{bmatrix} C & \0_{m\times{m}} \\ \0_{m\times{n}} & I_m   \end{bmatrix} 
    \begin{bmatrix} \Delta x \\ \Delta \Psi \end{bmatrix} = \Gamma \begin{bmatrix} \Delta x \\ \Delta \Psi \end{bmatrix},
  \end{equation*}
  we rewrite the incremental multiplier matrix condition~\eqref{eq:imm-d}
  as
  \begin{align} \label{eq:imm-x}
    \begin{bmatrix} \Delta y \\ \Delta \Psi \end{bmatrix}^\top
    M 
    \begin{bmatrix} \Delta y \\ \Delta \Psi \end{bmatrix}
    \geq 0
    \iff
    \begin{bmatrix} \Delta x \\ \Delta \Psi \end{bmatrix}^\top
    \Gamma^\top 
    M
    \Gamma
    \begin{bmatrix} \Delta x \\ \Delta \Psi \end{bmatrix}
    \geq 0.
  \end{align}

  The strong infinitesimal contractivity~\eqref{eq:contractivity_condition} condition for the system~\eqref{eq:cts_lure} may be rewritten as
  \begin{align} 
    &\qquad \qquad\bigl(A \Delta x+B\Delta \Psi\bigr)^\top P \Delta x \leq
    - \eta \Delta x^\top P \Delta x
    \nonumber \\
    & \iff
    \Delta x^\top (A^\top P + PA + 2\eta P)\Delta x + 2\Delta x^\top PB
    \Delta \Psi \leq 0
    \nonumber \\
    & \iff
        \begin{bmatrix} \Delta x \\ \Delta \Psi \end{bmatrix}^\top
    \begin{bmatrix}
          A^\top P + PA + 2\eta P & PB \\
          B^\top P &  \0_{m\times{m}}
    \end{bmatrix}
    \begin{bmatrix} \Delta x \\ \Delta \Psi \end{bmatrix}
    \leq 0.
     \label{eq:quad-ineq-contraction}
  \end{align}
  By invoking S-lemma, the inequality~\eqref{eq:imm-x} implies the inequality~\eqref{eq:quad-ineq-contraction} if the inequality~\eqref{eq:cts_lure_lmi} holds. 
  
  Similarly, in the discrete time case, one may rewrite the contractivity condition using the definition of the Lipschitz bound in $\norm{\cdot}{P}$.
  \begin{align*} 
    &\qquad \qquad\norm{A\Delta x + B \Delta \Psi}{P}^2 \leq \rho^2 \norm{\Delta x}{P}^2\\
    & \iff 
    \begin{bmatrix} \Delta x \\ \Delta \Psi \end{bmatrix}^\top
    \begin{bmatrix}
      A^\top P A - \rho^2 P &  A^\top P B \\
      B^\top P A  &  B^\top P B
    \end{bmatrix}
    \begin{bmatrix} \Delta x \\ \Delta \Psi \end{bmatrix}
    \leq 0.
  \end{align*}
  The final step is again an application of the S-Lemma.
\end{proof}
\section{Contraction of firing rate and Hopfield Neural networks} \label{sec:analysis_frnn_hnn}
We investigate the firing rate neural network (FRNN), and the Hopfield neural network (HNN) in both continuous and discrete time. 
The continuous time dynamics are given below. 
\begin{align}
    \dot{x} &= -x + \Psi(Wx + Bu); \quad y = Cx \label{eq:cts_firing_rate} \\
    \dot{x} &= -x + W\Psi(x) + Bu; \quad y = C\Psi(x). \label{eq:cts_hopfield}
\end{align}
Here, the state $x\in \R^n$, the bias $u \in \R^m, y \in \R^p$, and the synaptic matrix $W \in \R^{n\times n}$, $\Psi(\cdot)$ is an element-wise, slope restricted nonlinearity. Equation~\eqref{eq:cts_firing_rate} represents the firing rate dynamics, and Equation~\eqref{eq:cts_hopfield} represents the Hopfield dynamics. The corresponding discrete time dynamics are, 
\begin{align}
    x_{k+1} &=\Psi(Wx + Bu); \quad y = Cx, \label{eq:disc_firing_rate}\\
    x_{k+1} &= W\Psi(x) + Bu ; \quad y = C\Psi(x). \label{eq:disc_hopfield}
\end{align}

\begin{remark}
The state evolution in the FRNN and HNN dynamics~\eqref{eq:cts_firing_rate}-\eqref{eq:disc_hopfield} is structurally similar to a Lur'e model, which enables the use of Lemma~\ref{lem:absolute_contractivity_lure} for the analysis of contractivity. However, in classical Lur'e systems, the input enters linearly, and the output is a linear function of the state. In FRNNs, the input appears \emph{inside} the nonlinearity, and in HNNs, the output is \emph{a nonlinear function} of the state. This nonlinear coupling precludes the direct application of integral control results developed for standard Lur'e systems~\cite{MG-VA-ST-DA:23}.
\end{remark}

\begin{table*}[t]
    \centering
    \renewcommand{\arraystretch}{1.5}
    \caption{Contractivity Conditions for firing rate and Hopfield Models. See Theorem~\ref{thm:main_fr_h_conditions} for definitions of symbols. \newline The neural network with synaptic matrix $W$ corresponding to each entry is contracting (with rate $c$ or factor $\rho$) if there exist $P\succ0$ and diagonal $Q\succ0$ satisfying the corresponding LMI.}
    \setlength{\tabcolsep}{8pt}
    \begin{tabular}{@{}ll | c l r | c l r @{}}
        \toprule
        \textbf{Architecture} & \textbf{Nonlinearity} & \multicolumn{3}{c|}{\textbf{Discrete Time}} & \multicolumn{3}{c@{}}{\textbf{Continuous Time}} \\
        \midrule
        \multirow{2}{*}{\textbf{Firing Rate}} 
        & \textbf{CONE} $[-1, 1]$ & 
        $\begin{bmatrix}
        - \rho^2 P + W^\top Q W  & \0_{n \times n} \\
        \0_{n \times n} & P - Q
        \end{bmatrix}$ & $\preceq 0$ & (\refstepcounter{equation}\theequation)\label{eq:fr_disc_cone} & 
        $\begin{bmatrix}
        -2(1 - c)P + W^\top Q W & P  \\
        P  & - Q
        \end{bmatrix}$ & $\preceq 0$ & (\refstepcounter{equation}\theequation)\label{eq:fr_cts_cone} \\
        \addlinespace
        & \textbf{MONE} $[0, 1]$ &  
        $\begin{bmatrix}
        - \rho^2 P & W^\top Q \\
        Q W & P - 2Q
        \end{bmatrix}$ & $\preceq 0$ & (\refstepcounter{equation}\theequation)\label{eq:fr_disc_mone} &  
        $\begin{bmatrix}
        -2(1 - c)P & P + W^\top Q \\
        P + QW & -2Q
        \end{bmatrix}$ & $\preceq 0$ & (\refstepcounter{equation}\theequation)\label{eq:fr_cts_mone} \\
        \midrule
        \multirow{2}{*}{\textbf{Hopfield}} 
        & \textbf{CONE} $[-1, 1]$ & 
        $\begin{bmatrix}
        - \rho^2 P + Q & \0_{n \times n} \\
        \0_{n \times n} & W^\top P W - Q
        \end{bmatrix}$ & $\preceq 0$ & (\refstepcounter{equation}\theequation)\label{eq:h_disc_cone} & 
        $\begin{bmatrix}
        -2(1 - c)P + Q & PW  \\
        W^\top P  & -Q
        \end{bmatrix}$ & $\preceq 0$ & (\refstepcounter{equation}\theequation)\label{eq:h_cts_cone} \\
        \addlinespace
        & \textbf{MONE} $[0, 1]$ &              
        $\begin{bmatrix}
        - \rho^2 P & Q \\
        Q & W^\top P W - 2Q
        \end{bmatrix}$ & $\preceq 0$ & (\refstepcounter{equation}\theequation)\label{eq:h_disc_mone} & 
        $\begin{bmatrix}
        -2(1 - c)P & PW +  Q \\
        W^\top P + Q & -2Q
        \end{bmatrix}$ & $\preceq 0$ & (\refstepcounter{equation}\theequation)\label{eq:h_cts_mone} \\
        \bottomrule
    \end{tabular}
    \label{tab:lmi_conditions}
\end{table*}

\begin{thm}[Contractivity of FRNN and HNN]\label{thm:main_fr_h_conditions}
    Consider the FRNN and HNN dynamics~\eqref{eq:cts_firing_rate}-\eqref{eq:disc_hopfield}, for some constant $u$. Let $W$ be the synaptic weight matrix and $\Psi(\cdot)$ be the activation function belonging to either the CONE or MONE class. Given $P\succ0$, the dynamics are strongly contracting  with rate $c > 0$ (in continuous time) or factor $\rho \in [0,1)$ (in discrete time) in the norm $\norm{\cdot}{P}$ if there exists a diagonal  $Q\succ0$ such that the corresponding matrix inequality in Table~\ref{tab:lmi_conditions} is satisfied. 
\end{thm}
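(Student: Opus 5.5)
The plan is to obtain every entry of Table~\ref{tab:lmi_conditions} as a direct instance of Lemma~\ref{lem:absolute_contractivity_lure}. For each architecture I will choose matrices $(A,\tilde B,\tilde C)$ so that the state update becomes a Lur'e system. I will then compute $\Gamma^\top M\Gamma$ using the multipliers of Corollary~\ref{cor:CONE_MONE_IQC}, and match the resulting LMI with the table. The multiplier scaling $\lambda\ge0$ is absorbed into $Q$: I take $\lambda=1$ and let $Q\succ0$ be the free diagonal matrix, which is admissible because Corollary~\ref{cor:CONE_MONE_IQC} holds for every positive diagonal $Q$.

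\textbf{Handling the constant input.} The input $u$ is constant, so it only enters through increments, and it cancels in all of them. For the HNN, $Bu$ is an additive constant. Adding a constant vector field (or map) leaves $\Delta x$, $\Delta\Psi$ and the differences $F(x_1)-F(x_2)$ unchanged. The proof of Lemma~\ref{lem:absolute_contractivity_lure} uses only these increments, so it applies verbatim. For the FRNN, define $\tilde\Psi(y):=\Psi(y+Bu)$. This map is still elementwise, and it is slope-restricted in the same interval $[-1,1]$ or $[0,1]$, because slope restrictions are translation invariant. Hence $\tilde\Psi$ admits the same multipliers $M_{\textup{CONE}}$ and $M_{\textup{MONE}}$. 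I expect this reduction, which keeps the Lur'e structure exact despite the input lying inside the nonlinearity, to be the only point needing care. Everything after it is bookkeeping.

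\textbf{FRNN.} Take $m=n$, $\tilde C=W$ and $\tilde B=I_n$, with $A=-I_n$ in continuous time and $A=\0_{n\times n}$ in discrete time. With $\Gamma=\diag{W,I_n}$, the CONE multiplier gives $\Gamma^\top M_{\textup{CONE}}\Gamma=\diag{W^\top QW,-Q}$. The MONE multiplier gives
\[
\Gamma^\top M_{\textup{MONE}}\Gamma=\begin{bmatrix}\0 & W^\top Q\\ QW & -2Q\end{bmatrix}.
\]
In continuous time, $PA+A^\top P+2cP=-2(1-c)P$ and the off-diagonal block is $P\tilde B=P$. Adding the two multiplier expressions to~\eqref{eq:cts_lure_lmi} yields exactly~\eqref{eq:fr_cts_cone} and~\eqref{eq:fr_cts_mone}. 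In discrete time, the first matrix in~\eqref{eq:disc_lure_lmi} reduces to $\diag{-\rho^2P,P}$. Adding the multiplier expressions gives~\eqref{eq:fr_disc_cone} and~\eqref{eq:fr_disc_mone}.

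\textbf{HNN.} Take $\tilde C=I_n$, so $\Gamma=I_{2n}$ and $\Gamma^\top M\Gamma=M$ itself, with $\tilde B=W$. The matrix $A$ is $-I_n$ in continuous time and $\0_{n\times n}$ in discrete time. In continuous time, \eqref{eq:cts_lure_lmi} becomes
\[
\begin{bmatrix}-2(1-c)P & PW\\ W^\top P & \0\end{bmatrix}+M.
\]
With $M_{\textup{CONE}}$ this is~\eqref{eq:h_cts_cone}, and with $M_{\textup{MONE}}$ it is~\eqref{eq:h_cts_mone}. In discrete time, \eqref{eq:disc_lure_lmi} becomes $\diag{-\rho^2P,\,W^\top PW}+M$, which gives~\eqref{eq:h_disc_cone} and~\eqref{eq:h_disc_mone}.

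In each case, Lemma~\ref{lem:absolute_contractivity_lure} then yields strong contractivity with rate $c$ or factor $\rho$ in $\norm{\cdot}{P}$, which completes the argument.
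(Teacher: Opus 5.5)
Your proposal is correct and follows the same route as the paper: each entry of Table~\ref{tab:lmi_conditions} is an instance of Lemma~\ref{lem:absolute_contractivity_lure} with the multipliers of Corollary~\ref{cor:CONE_MONE_IQC}, taking $B=I_n$, $C=W$ for the FRNN and $B=W$, $C=I_n$ for the HNN, with $A=-I_n$ in continuous time and $A=0$ in discrete time. The paper writes out only the continuous-time FRNN/MONE case and leaves the constant input and the choice $\lambda=1$ implicit, so your explicit check of all eight cases and of these two points is the same argument done more carefully.
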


\begin{proof}
     The proof of each of the eight cases follows from  Lemma~\ref{lem:absolute_contractivity_lure} and Corollary~\ref{cor:CONE_MONE_IQC}. For simplicity, we provide the proof for only the FRNN case in continuous time for MONE nonlinearities; the other cases follow similarly. We begin by noting that the dynamics~\eqref{eq:cts_firing_rate} is a continuous-time Lur'e system of the form~\eqref{eq:cts_lure}, with $A = - I_n, B = I_n$ and $C = W$. Per Corollary~\ref{cor:CONE_MONE_IQC} a MONE nonlinearity admits the matrix multiplier $M = \begin{bmatrix}
         \0_{n \times n} & Q  \\ Q & -2Q
     \end{bmatrix}$, for some positive diagonal $Q$. Substituting these equalities into~\eqref{eq:cts_lure_lmi} yields~\eqref{eq:fr_cts_mone}.
\end{proof}

\subsection{Structural relationships of contractivity conditions}
First we show that the discrete-time CONE inequalities for both network architectures are reducible to Schur diagonal stability, generalizing prior results on firing-rate models~\cite{WDA-ALB-MF:24}.

\begin{lemma}[Schur diagonal stability]
\label{lem:discrete_cone_schur}
The discrete-time CONE conditions~\eqref{eq:fr_disc_cone} and~\eqref{eq:h_disc_cone} hold for a given $W$ if and only if $W$ is Schur diagonally stable.
\end{lemma}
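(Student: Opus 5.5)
Recall that $W$ is Schur diagonally stable if there exists a positive diagonal $D$ with $W^\top D W - D \prec 0$; equivalently, $\norm{W}{D}<1$ in the weighted Euclidean norm induced by $D$. The statement should be read with some $\rho\in[0,1)$ and some $P\succ0$ existing, since the conditions in Table~\ref{tab:lmi_conditions} are feasibility problems in $(P,Q)$. Both matrices~\eqref{eq:fr_disc_cone} and~\eqref{eq:h_disc_cone} are block diagonal, so each condition splits into two decoupled LMIs. I will use this splitting and then eliminate $P$ by a monotonicity argument that forces $P$ to be compared with the diagonal $Q$.

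\emph{Firing rate case.} Condition~\eqref{eq:fr_disc_cone} is equivalent to the pair $W^\top Q W \preceq \rho^2 P$ and $P \preceq Q$. Combining them gives
\[
W^\top Q W \preceq \rho^2 P \preceq \rho^2 Q .
\]
Hence $W^\top Q W - Q \preceq -(1-\rho^2)Q \prec 0$, so $W$ is Schur diagonally stable with $D=Q$.

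Conversely, suppose $W^\top D W \prec D$ for some positive diagonal $D$. By strictness and compactness there is $\rho<1$ with $W^\top D W \preceq \rho^2 D$. Choosing $P=Q=D$ then satisfies both blocks.

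\emph{Hopfield case.} Condition~\eqref{eq:h_disc_cone} is equivalent to $Q \preceq \rho^2 P$ and $W^\top P W \preceq Q$. Combining them gives
\[
W^\top P W \preceq Q \preceq \rho^2 P .
\]
This only shows that $W$ is Schur stable in the non-diagonal weight $P$, so the right combination has to be found to reach a diagonal certificate. The idea is to use the congruence monotonicity $X\preceq Y \Rightarrow W^\top X W\preceq W^\top Y W$. Applying it to $P \succeq \rho^{-2}Q$ gives
\[
\rho^{-2}\, W^\top Q W \preceq W^\top P W \preceq Q,
\]
so $W^\top Q W \preceq \rho^2 Q$ and $Q$ itself is a diagonal Schur certificate.

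For the converse in the Hopfield case, take $W^\top D W \preceq \rho^2 D$ as above and set $Q=D$, $P=\rho^{-2}D$ when $\rho>0$. Then $Q\preceq\rho^2P$ holds with equality, and $W^\top P W = \rho^{-2}W^\top D W \preceq D = Q$. The degenerate case $\rho=0$ forces $Q\preceq 0$, which is infeasible, so it can be excluded; alternatively one simply picks $\rho\in(0,1)$.

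The main obstacle is this extraction of a diagonal certificate in the Hopfield case, since the obvious chain only gives a full-matrix certificate $P$. The congruence step above resolves it. The remaining care is in stating that the equivalence is with respect to the existence of some $\rho<1$, and in handling the strict-versus-nonstrict inequalities, which the compactness argument (choosing $\rho$ close to $1$) does.
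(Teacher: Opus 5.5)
Your proof is correct and follows the paper's argument. You split each block-diagonal LMI into its two blocks, chain $W^\top Q W \preceq \rho^2 P \preceq \rho^2 Q$ in the firing-rate case, apply the congruence by $W$ to $Q \preceq \rho^2 P$ in the Hopfield case, and use $P=Q$ and $\rho^2 P = Q$ for the converses; your remarks on extracting some $\rho<1$ from the strict inequality and on excluding $\rho=0$ simply make explicit what the paper leaves implicit.
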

\begin{proof}
    Note that ~\eqref{eq:fr_disc_cone} $\iff P \preceq Q$ and $W^\top QW \preceq \rho^2 P$. Similarly,~\eqref{eq:h_disc_cone} $\iff Q \preceq \rho^2P$ and $W^\top PW \preceq Q$.
    First, for~\eqref{eq:fr_disc_cone}, $\implies$ follows from observing that $W^\top QW \preceq \rho^2 P \preceq \rho^2 Q$. The $\impliedby$ direction follows from choosing $P = Q$. Next, for~\eqref{eq:h_disc_cone}, for the $\implies$ direction, consider the inequality $Q \preceq \rho^2 P$, and right and left multiply by $W$ and its transpose. We get $W^\top Q W \preceq \rho^2 W^\top P W \prec Q$. For the $\impliedby$ direction, choosing $\rho^2 P = Q$ gives us our result. 
\end{proof}

Next, we establish the relationships between the various conditions presented in Table~\ref{tab:lmi_conditions}.

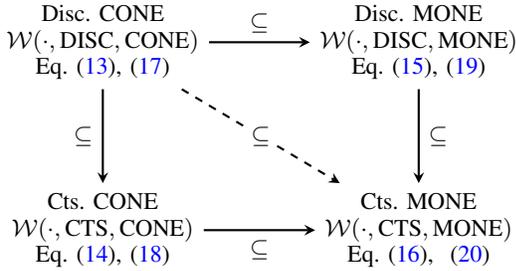
\begin{figure}[htpb]
    \centering

\begin{tikzpicture}[
    >=stealth,
    thick,
    every node/.style={align=center, font=\small}
]
    \node (DC) at (0, 2.5) {Disc. CONE \\ $\mathcal{W}(\cdot, \text{DISC}, \text{CONE})$ \\ Eq.~\eqref{eq:fr_disc_cone},~\eqref{eq:h_disc_cone}};
    \node (DM) at (4.2, 2.5) {Disc. MONE \\ $\mathcal{W}(\cdot, \text{DISC}, \text{MONE})$ \\ Eq.~\eqref{eq:fr_disc_mone},~\eqref{eq:h_disc_mone}};
    \node (CC) at (0, 0) {Cts. CONE \\ $\mathcal{W}(\cdot, \text{CTS}, \text{CONE})$ \\ Eq.~\eqref{eq:fr_cts_cone},~\eqref{eq:h_cts_cone}};
    \node (CM) at (4.2, 0) {Cts. MONE \\ $\mathcal{W}(\cdot, \text{CTS}, \text{MONE})$ \\ Eq.~\eqref{eq:fr_cts_mone}, ~\eqref{eq:h_cts_mone}};

    \draw[->] (DC) -- node[above] {$\subseteq$} (DM);
    \draw[->] (DC) -- node[left] {$\subseteq$} (CC);
    \draw[->] (CC) -- node[below] {$\subseteq$} (CM);
    \draw[->] (DM) -- node[right] {$\subseteq$} (CM);
    
    \draw[->, dashed] (DC) -- node[fill=white, inner sep=1pt] {$\subseteq$} (CM);
\end{tikzpicture}
\caption{A summary of relationships for the contractivity conditions from Table~\ref{tab:lmi_conditions}. The sets $\mcW(\cdot, \cdot, \cdot)$ are described in Theorem~\ref{thm:structural_relationships}. The discrete time CONE condition restricts the weight matrices the most, whereas the continuous time MONE condition enables maximum expressivity.}
\end{figure}

\begin{thm}[Reductions and duality of matrix inequalities]\label{thm:structural_relationships}
Let $\mcW(\mcM, \mcT, \mcN)$ denote the set of synaptic matrices~$W$ satisfying the matrix inequality in Table~\ref{tab:lmi_conditions} for model $\mathcal{M} \in \{\text{FR}, \text{Hop}\}$, time domain $\mathcal{T} \in \{\text{CTS}, \text{DISC}\}$, and nonlinearity class $\mathcal{N} \in \{\text{CONE}, \text{MONE}\}$. The following hold:
    \begin{enumerate}
        \item The set of synaptic matrices satisfying the condition for CONE nonlinearities is contained within the corresponding set for MONE nonlinearities.\label{item:nonlinearity_inclusion}
        \begin{equation*}
            \mathcal{W}(\cdot, \cdot, \text{CONE}) \subseteq \mathcal{W}(\cdot, \cdot, \text{MONE}).
        \end{equation*}
        
        \item \label{item:discretization} If a synaptic matrix $W$ satisfies any discrete-time condition with factor $\rho \in [0, 1)$, then it also satisfies the corresponding continuous-time condition with rate $c = (1 - \rho^2)/2$. Thus:
        \begin{equation*}
             \mathcal{W}(\cdot, \text{DISC}, \cdot) \subseteq \mathcal{W}(\cdot, \text{CTS}, \cdot).
        \end{equation*}
        
        \item \label{item:duality}
        A matrix $W$ satisfies the firing rate condition for some parameters $P \succ 0$ and diagonal $Q \succ 0$ if and only if $W^\top$ satisfies the corresponding Hopfield LMI condition for some parameters $P' \succ 0$ and diagonal $Q' \succ 0$.
        \begin{equation*}
             W \in \mathcal{W}(\text{FR}, \cdot, \cdot) \iff W^\top \in \mathcal{W}(\text{Hop}, \cdot, \cdot).
        \end{equation*}
    \end{enumerate}
\end{thm}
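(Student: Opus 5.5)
The plan is to exploit that every entry of Table~\ref{tab:lmi_conditions} arises from Lemma~\ref{lem:absolute_contractivity_lure}, with $(A,B,C)=(-I_n,I_n,W)$ for the continuous-time FRNN, $(-I_n,W,I_n)$ for the continuous-time HNN, and $A=\0_{n\times n}$ with the same $B,C$ in discrete time. Write $L$ for the Lur'e block (the first matrix in~\eqref{eq:cts_lure_lmi} or~\eqref{eq:disc_lure_lmi}), so each table entry has the form $L+\Gamma^\top M\Gamma$ with $M\in\{M_{\textup{CONE}},M_{\textup{MONE}}\}$ from Corollary~\ref{cor:CONE_MONE_IQC}. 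Items~\ref{item:nonlinearity_inclusion} and~\ref{item:discretization} should then follow from ordering these matrices in the Loewner order for fixed $(P,Q)$, and item~\ref{item:duality} from congruence transformations and Schur complements with the inverted pair $(P^{-1},Q^{-1})$.

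For~\ref{item:nonlinearity_inclusion}, I will use the identity $M_{\textup{CONE}}=M_{\textup{MONE}}+N$ with $N=\begin{bmatrix} Q & -Q\\ -Q & Q\end{bmatrix}\succeq 0$. This holds because $y^\top Qy-\psi^\top Q\psi-(2y^\top Q\psi-2\psi^\top Q\psi)=(y-\psi)^\top Q(y-\psi)$. Since $L$ does not depend on the nonlinearity class,
\[
L+\Gamma^\top M_{\textup{MONE}}\Gamma=\bigl(L+\Gamma^\top M_{\textup{CONE}}\Gamma\bigr)-\Gamma^\top N\Gamma\preceq 0 .
\]
So the same $(P,Q)$ works, uniformly for all four model/time combinations. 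A quick check confirms this on~\eqref{eq:fr_cts_cone} versus~\eqref{eq:fr_cts_mone}: the difference is $\begin{bmatrix} W^\top QW & -W^\top Q\\ -QW & Q\end{bmatrix}\succeq0$.

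For~\ref{item:discretization}, set $c=(1-\rho^2)/2$, so that $-2(1-c)P=-(1+\rho^2)P$. In both discrete models $A=\0$, so the discrete $L$ is $\diag{-\rho^2P,\,B^\top PB}$. The continuous $L$ (with $A=-I$) is $\begin{bmatrix}-(1+\rho^2)P & PB\\ B^\top P & \0\end{bmatrix}$. Their difference is
\[
\begin{bmatrix}-P & PB\\ B^\top P & -B^\top PB\end{bmatrix}=-\begin{bmatrix} I\\ -B^\top\end{bmatrix}P\begin{bmatrix} I & -B\end{bmatrix}\preceq 0 .
\]
Adding the common term $\Gamma^\top M\Gamma$ (note $\Gamma$ is identical in both time domains), the continuous-time LMI holds with the same $(P,Q)$, and $c>0$ because $\rho<1$.

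Item~\ref{item:duality} is where I expect the real work, since the FR and Hopfield matrices have different block structure. I plan to treat it case by case, always with the substitution $P'=P^{-1}$ (up to scaling) and $Q'=Q^{-1}$, which is diagonal and positive.
\begin{itemize}
\item Continuous MONE: the congruence $\diag{P^{-1},Q^{-1}}$ maps~\eqref{eq:fr_cts_mone} exactly onto~\eqref{eq:h_cts_mone} for $W^\top$, since the off-diagonal block becomes $Q^{-1}+P^{-1}W^\top$ and the lower-right block becomes $-2Q^{-1}$.
\item Continuous CONE: I take Schur complements on the $-Q$ blocks. This makes~\eqref{eq:fr_cts_cone} equivalent to $-2(1-c)P+W^\top QW+PQ^{-1}P\preceq0$. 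Multiplying on both sides by $P^{-1}$ gives the Schur form of~\eqref{eq:h_cts_cone} for $W^\top$ with $(P^{-1},Q^{-1})$.
\item Discrete MONE with $\rho>0$: the Schur complement on the $-\rho^2P$ block gives $P-2Q+\rho^{-2}QWP^{-1}W^\top Q\preceq0$. Multiplying on both sides by $Q^{-1}$ yields exactly the Schur form of~\eqref{eq:h_disc_mone} for $W^\top$ with $P'=\rho^{-2}P^{-1}$ and $Q'=Q^{-1}$. For $\rho=0$ the LMI forces $W=\0$, which is trivially self-dual.
\item Discrete CONE: this follows from Lemma~\ref{lem:discrete_cone_schur}, because $W^\top DW\prec D$ if and only if $WD^{-1}W^\top\prec D^{-1}$ (a Schur complement argument). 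Hence $W$ is Schur diagonally stable if and only if $W^\top$ is.
\end{itemize}
In each case the argument is reversible because the map $(P,Q)\mapsto(P^{-1},Q^{-1})$ is an involution. The points needing care are that Schur complements are taken only on definite blocks ($-Q$, $-\rho^2P$), and the degenerate case $\rho=0$.
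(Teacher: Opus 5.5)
Your proposal is correct and is essentially the paper's proof. Your $\Gamma^\top N\Gamma$ is exactly the negative of the paper's $M_1$ (firing rate) and $M_2$ (Hopfield), and your identity for item (ii) is the paper's $\begin{bmatrix} P & -P\\ -P & P\end{bmatrix}\succeq 0$ argument, written once for all four model/time cases via the Lur'e form. Item (iii) uses the same congruence, the same Schur complements, the same parameter maps $(P^{-1},Q^{-1})$ and $(\rho^{-2}P^{-1},Q^{-1})$, and the same Schur-diagonal-stability reduction.

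One small correction concerns $\rho=0$. There the Hopfield LMI~\eqref{eq:h_disc_mone} has a zero $(1,1)$ block against the nonzero off-diagonal block $Q$, so it is infeasible for every $W$, and $W=0$ is not ``self-dual'' at that fixed $\rho$. Instead, note that $\rho$ enters only through $-\rho^2P$, so feasibility at $\rho$ persists for any larger $\rho<1$, and you may assume $\rho>0$ throughout.
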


\begin{proof}
    Regarding~\ref{item:nonlinearity_inclusion}, consider the negative semidefinite matrices $M_1 = \begin{bmatrix}
        - W^\top Q W & W^\top Q \\ Q W & - Q
    \end{bmatrix}$ and $M_2 = \begin{bmatrix}
        - Q & Q \\ Q & - Q
    \end{bmatrix}$. Adding the inequality $M_1 \preceq 0$ to each of the firing rate CONE inequalities inequalities yields the corresponding firing rate MONE inequality. Similarly, adding the inequality $M_2 \preceq 0$ to each of the Hopfield CONE inequalities yields the corresponding MONE inequality.
    Next, regarding~\ref{item:discretization}, we present a proof for the firing rate MONE case. The other three cases follow similarly. Let $W$ satisfy the discrete-time condition Eq.~\eqref{eq:fr_disc_mone} for some $\rho$. Using the inverse Schur complement of the identity $P - PP^{-1}P \succeq 0$, and setting $c = (1-\rho^2)/2$, and adding and subtracting the correct terms, 
     \begin{align*}
        &\begin{bmatrix} P & -P \\ -P & P \end{bmatrix} \succeq 0 \implies \nonumber \\
        &\begin{bmatrix} -(1 + \rho^2)P & P + W^\top Q \\ P + QW & -2Q \end{bmatrix} \preceq 
        \begin{bmatrix} -\rho^2P & W^\top Q \\ QW & P -2Q \end{bmatrix}.
    \end{align*}
    Consequently, if the RHS is negative semidefinite (i.e.~\eqref{eq:fr_disc_mone} holds), so is the LHS (i.e.~\eqref{eq:fr_cts_mone} holds). Finally, for~\ref{item:duality}, we shall prove each case separately. First, in the continuous time MONE case, consider the LHS of~\eqref{eq:fr_cts_mone}. Let $T = \diag{P^{-1}, Q^{-1}}$.  Pre- and post-multiplying~\eqref{eq:fr_cts_mone} by $T$ yields:
    \begin{equation*}
         T \begin{bmatrix} -2(1 - c)P & \star \\ P + QW & -2Q \end{bmatrix} T = 
         \begin{bmatrix} -2(1 - c)P^{-1} & \star \\ W P^{-1} + Q^{-1} & -2Q^{-1} \end{bmatrix}
    \end{equation*}
    where $\star$ denotes symmetric terms. This results in the LHS of the Hopfield matrix inequality~\eqref{eq:h_cts_mone} with parameters $W' = W^\top$, $P' = P^{-1}$, and $Q' = Q^{-1}$. Next, for the continuous time CONE case, consider the Schur complement of the firing rate matrix inequality~\eqref{eq:fr_cts_cone} about the (2,2) block. We obtain, 
    \begin{align}
        -2(1 - c)P + W^\top Q W + PQ^{-1}P \preceq 0.
    \end{align}
    Upon left and right multiplying this inequality by $P^{-1}$,
    \begin{align}\label{eq:cone_duality_step}
        -2(1 - c)P^{-1} + P^{-1}W^\top Q WP^{-1} + Q^{-1} \preceq 0.
    \end{align}
    Note that~\eqref{eq:cone_duality_step} is the Schur complement of the Hopfield matrix inequality~\eqref{eq:h_cts_cone} with parameters $W' = W^\top$, $P' = P^{-1}$, and $Q' = Q^{-1}$.  Next, in the discrete time MONE case, consider the Schur complement of the firing rate matrix~\eqref{eq:fr_disc_mone} about the (1,1) block. 
    \begin{align}
        P - 2Q + \rho^{-2} QWP^{-1}W^\top Q \preceq 0.
    \end{align}
    Upon left and right multiplying this inequality by $Q^{-1}$,
    \begin{align}\label{eq:mone_duality_step}
        Q^{-1}PQ^{-1} - 2Q^{-1} + \rho^{-2} WP^{-1}W^\top  \preceq 0.
    \end{align}
    Note that~\eqref{eq:mone_duality_step} is the Schur complement of the Hopfield matrix inequality~\eqref{eq:h_disc_mone} with parameters $W' = W^\top$, $P' = \rho^{-2}P^{-1}$, and $Q' = Q^{-1}$. Finally, for the discrete time CONE case, both conditions are reducible to Schur diagonal stability as shown in Lemma~\ref{lem:discrete_cone_schur}. Therefore, we only need to show that 
    $W^\top QW \preceq \rho^2 Q$ iff $WQ^{-1}W^\top \preceq \rho^2 Q^{-1}$. Via Schur complement, observe that,
    \begin{align*}
        W^\top QW \preceq \rho^2 Q &\iff \begin{bmatrix} \rho^2 Q & W^\top Q \\ Q W & Q \end{bmatrix} \succeq 0 \\
    & \iff\begin{bmatrix} \rho^2 Q^{-1} & Q^{-1} W^\top \\ W Q^{-1} & Q^{-1} \end{bmatrix} \succeq 0. \\
    &\iff  Q^{-1} -  \rho^{-2} W Q^{-1} W^\top \succeq 0.
\end{align*}
Where the second step follows by pre and post multiplying the Schur complement by $\diag{Q^{-1},\; Q^{-1}}$, and the final step follows by applying Schur complement with respect to the (1,1) block.
\end{proof}

\begin{cor}
    For a fixed $c > 0$ or $\rho \in [0,1)$, each contractivity condition in Table~\ref{tab:lmi_conditions} is equivalent to a Linear Matrix Inequality (LMI) under a bijective change of variables. Consequently, the set of contracting synaptic matrices W, along with the Euclidean weighting P and diagonal multiplier Q, is characterized by a convex feasibility problem. 
\end{cor}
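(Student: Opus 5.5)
We need to show that, for fixed $c$ or $\rho$, each of the eight inequalities in Table~\ref{tab:lmi_conditions} becomes jointly linear in suitable new variables, via a bijection back to $(W,P,Q)$. The nonlinearity in $(W,P,Q)$ comes only from products such as $QW$, $PW$, $W^\top Q W$ and $W^\top P W$. The plan is to introduce a product variable in each case, and to use a Schur complement wherever $W$ appears quadratically. The bijections are invertible because $P\succ0$ and diagonal $Q\succ0$ are nonsingular, so $W$ is recovered uniquely from the new variable.

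\emph{MONE cases.} For \eqref{eq:fr_disc_mone} and \eqref{eq:fr_cts_mone}, set $Y := QW$. With $c$ or $\rho$ fixed, every block is affine in $(P,Q,Y)$, and $(W,P,Q)\mapsto(Y,P,Q)$ is a bijection with inverse $W = Q^{-1}Y$. For \eqref{eq:h_cts_mone}, set $Z := PW$, which is affine in $(P,Q,Z)$ and inverted by $W=P^{-1}Z$. For \eqref{eq:h_disc_mone}, the $(2,2)$ block contains $W^\top P W = Z^\top P^{-1} Z$. Since $P\succ 0$, a Schur complement shows that \eqref{eq:h_disc_mone} holds if and only if
\begin{equation*}
\begin{bmatrix} -\rho^2 P & Q & \0_{n\times n}\\ Q & -2Q & Z^\top \\ \0_{n\times n} & Z & -P\end{bmatrix}\preceq 0,
\end{equation*}
which is linear in $(P,Q,Z)$.

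\emph{CONE cases.} For \eqref{eq:fr_cts_cone}, note that $W^\top Q W = Y^\top Q^{-1} Y$ with $Y=QW$. Taking a Schur complement on a third block $-Q$ gives the equivalent LMI
\begin{equation*}
\begin{bmatrix} -2(1-c)P & P & Y^\top\\ P & -Q & \0_{n\times n}\\ Y & \0_{n\times n} & -Q\end{bmatrix}\preceq 0.
\end{equation*}
For \eqref{eq:fr_disc_cone}, the inequality is block diagonal, so it is equivalent to $P\preceq Q$ together with the same Schur-complement lift of $-\rho^2P + Y^\top Q^{-1}Y\preceq 0$. For \eqref{eq:h_cts_cone}, substitute $Z=PW$ directly. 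For \eqref{eq:h_disc_cone}, the condition is equivalent to $Q\preceq\rho^2P$ together with the lift of $Z^\top P^{-1}Z - Q\preceq 0$. Alternatively, Lemma~\ref{lem:discrete_cone_schur} reduces both discrete CONE cases to $W^\top QW\preceq\rho^2Q$, which is handled by the same lift with $Y=QW$.

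\emph{Main obstacle.} The delicate step is ensuring that the Schur complement steps are genuine equivalences and not merely sufficient conditions. This requires the pivot block ($-P$ or $-Q$) to be strictly negative definite, which holds because $P,Q\succ0$ are imposed as strict constraints. These strict constraints are themselves convex (open cones), and the diagonal structure of $Q$ is a linear constraint. Consequently, the set of feasible $(Y\text{ or }Z,\,P,\,Q)$ is convex. Its image under the bijection is exactly the set of triples $(W,P,Q)$ satisfying the original condition, which yields the claimed convex feasibility characterization.
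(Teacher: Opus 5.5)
Your proof is correct, but it is organized differently from the paper's.

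\textbf{The paper's route.} The paper relies on the firing-rate/Hopfield duality of Theorem~\ref{thm:structural_relationships}\ref{item:duality}. It linearizes only the cases that are already affine after a product substitution: continuous-time Hopfield with $S=W^\top P$, and discrete-time firing-rate MONE with $S=QW$. It then transfers these to the other architecture by duality, so the continuous-time firing-rate conditions become LMIs in the inverted variables $P^{-1},Q^{-1}$. Both discrete CONE cases are handled by Lemma~\ref{lem:discrete_cone_schur}, which reduces them to $W^\top QW\preceq\rho^2Q$.

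\textbf{Your route.} You treat all eight cases directly, keeping $P$ and $Q$ themselves as decision variables. You absorb each quadratic term ($W^\top QW=Y^\top Q^{-1}Y$, $W^\top PW=Z^\top P^{-1}Z$) by a Schur-complement lift with a strictly negative pivot.

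\textbf{What each buys.} The paper's approach is shorter, and every resulting LMI stays $2n\times 2n$. In the cases \eqref{eq:fr_cts_cone} and \eqref{eq:h_disc_mone}, passing to the dual variables makes the quadratic term disappear, whereas your lifts there are $3n\times 3n$. Your approach has three advantages:
\begin{itemize}
\item It is self-contained and does not depend on Theorem~\ref{thm:structural_relationships}.
\item In the discrete CONE cases it characterizes the full triple $(W,P,Q)$, which matches the corollary's phrasing about $W$ together with $P$ and $Q$. The Schur-diagonal-stability reduction retains only the pair $(W,Q)$.
\item It covers $\rho=0$ without special treatment. The paper's duality step for the discrete MONE case takes a Schur complement about the block $-\rho^2P$, which requires $\rho>0$.
\end{itemize}
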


\begin{proof}
Due to the duality result between Hopfield and firing rate neural networks established in Theorem~\ref{thm:structural_relationships}, if a condition can be parameterized as an LMI for one architecture, it is inherently an LMI for the other architecture.  

For the continuous-time Hopfield networks (both CONE and MONE), applying the variable substitution $S = W^\top P$ renders the respective matrix inequalities linear in the variables $(P, Q, S)$. By duality, the continuous-time firing rate conditions are equivalently LMIs in the variables $(P^{-1}, Q^{-1}, W^\top P^{-1})$. 

In discrete time, the firing rate MONE inequality is inherently linear in the variables $(P, Q, S)$ under the substitution $S = QW$. The discrete-time Hopfield MONE case is consequently an LMI by duality. 

Finally, the discrete-time CONE conditions for both architectures reduce to Schur diagonal stability. This is well known to be an LMI; by applying the Schur complement to $W^\top Q W \preceq \rho^2 Q$ and substituting $S = QW$, we obtain the equivalent LMI $\begin{bmatrix} \rho^2 Q & S^\top \\ S & Q \end{bmatrix} \succeq 0$.
\end{proof}

\subsection{Necessary versus sufficient conditions for contractivity}

Theorem~\ref{thm:structural_relationships} implies that continuous-time FRNNs with MONE nonlinearities admit the largest set of synaptic matrices. Consequently, we focus on these dynamics with contractivity inequality~\eqref{eq:fr_cts_mone} for the rest of this letter. 
 Corresponding results for the Hopfield case are obtained via transposition.  First, we show that Lyapunov Diagonal Stability (LDS) of $W{-}I_n$ is necessary for the matrix inequality~\eqref{eq:fr_cts_mone} to hold, but not sufficient. While the insufficiency result is known~\cite[Theorem 7]{LK-ME-JJES:22},  we provide a brief proof for completeness.

\begin{pro}[Relationship with Lyapunov Diagonal Stability]\label{prop:LDS}
    Let $W \in \R^{n\times n}$ be a synaptic matrix for the firing rate neural network~\eqref{eq:cts_firing_rate}. The following hold
    \begin{enumerate}
        \item If $W$ satisfies~\eqref{eq:fr_cts_mone} for $P \succ 0$, diagonal $Q \succ 0$ and $c > 0$, then $W - I_n$ is $Q-$LDS. \label{item:lds_i}
        \item The converse is not true; $W - I_n$ being LDS is not sufficient to guarantee contraction in any fixed norm. \label{item:lds_ii}
    \end{enumerate}
\end{pro}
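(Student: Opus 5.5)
For part~\ref{item:lds_i}, the plan is to test the matrix inequality~\eqref{eq:fr_cts_mone} against a well-chosen block vector. Pre- and post-multiplying by $\begin{bmatrix} I_n & I_n\end{bmatrix}$ (that is, evaluating the quadratic form at $(v,v)$) gives
\begin{equation*}
-2(1-c)P + (P + W^\top Q) + (P + QW) - 2Q \preceq 0 ,
\end{equation*}
which simplifies to
\begin{equation*}
2cP + W^\top Q + QW - 2Q \preceq 0 .
\end{equation*}
Since $c>0$ and $P\succ 0$, this yields $(W-I_n)^\top Q + Q(W-I_n) \preceq -2cP \prec 0$. Hence $W-I_n$ is Lyapunov diagonally stable with the diagonal certificate $Q$. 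This part is essentially a one-line computation, and the only care needed is the strictness, which comes from $c>0$.

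For part~\ref{item:lds_ii}, the plan is to exhibit a counterexample. Since the insufficiency is known~\cite[Theorem 7]{LK-ME-JJES:22}, I would reproduce a low-dimensional one. I would take $n=2$ and a MONE nonlinearity whose slopes can switch between $0$ and $1$ in each coordinate, for example ReLU or a suitable piecewise-linear saturation. The closed-loop Jacobians along trajectories are then $-I_n + DW$ with $D=\diag{d_1,d_2}$ and $d_i\in\{0,1\}$, or more generally $d_i\in[0,1]$.

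Contraction in a fixed norm $\norm{\cdot}{P}$ requires a common quadratic Lyapunov function, i.e.\ $(-I+DW)^\top P + P(-I+DW) \prec 0$ for all admissible $D$. Because $-I+DW$ is affine in $D$, this reduces to the vertices $D\in\{0,I,\diag{1,0},\diag{0,1}\}$. I would choose $W$ such that $W-I$ is LDS, e.g.\ a matrix with large skew-symmetric off-diagonal part and small diagonal, such as
\begin{equation*}
W=\begin{bmatrix} 0 & a \\ -a & 0\end{bmatrix}
\end{equation*}
or an asymmetric variant $\begin{bmatrix} 0 & a\\ -b & 0\end{bmatrix}$. I would then show that no single $P$ works across the vertices $I$, $\diag{1,0}$ and $\diag{0,1}$. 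These vertices give $-I+W$, together with the rank-one-perturbed matrices $\begin{bmatrix}-1 & a\\ 0&-1\end{bmatrix}$ and $\begin{bmatrix}-1&0\\-b&-1\end{bmatrix}$. Large $a$ forces $P$ to be strongly skewed in one direction, while large $b$ forces the opposite skew, and these requirements should be incompatible. I would make this precise by writing $P=\begin{bmatrix}1&p\\p&r\end{bmatrix}$ and checking the $2\times2$ determinant conditions.

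The main obstacle is part~\ref{item:lds_ii}: the vertex LMIs must be shown to be jointly infeasible while $W-I$ remains LDS. If the pure-skew guess turns out to be feasible (e.g.\ with $P=I$), I would instead take $W-I$ with a modest negative diagonal and large asymmetric off-diagonal entries. The alternative is to cite the example from~\cite{LK-ME-JJES:22} directly and verify its LDS property by producing an explicit diagonal $Q$.
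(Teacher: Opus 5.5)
Your proposal follows the paper's proof. Part (i) is the same $[I_n, I_n]$ congruence, giving $(W-I_n)^\top Q + Q(W-I_n) \preceq -2cP \prec 0$. Part (ii) uses the same $2\times 2$ skew-symmetric counterexample, with infeasibility shown through determinant conditions at the vertices $\diag{1,0}$ and $\diag{0,1}$. Carrying out that check gives $p_{22} > \tfrac{a^2}{4}p_{11}$ and $p_{11} > \tfrac{a^2}{4}p_{22}$. So the pure-skew choice already works for any $|a|\ge 2$ (the paper takes $a=4$), and your fallback is not needed.
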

\begin{proof}
    To prove~\ref{item:lds_i}, we left and right multiply the matrix inequality~\eqref{eq:fr_cts_mone} by the matrix $[I_n, I_n] \in \R^{n \times 2n}$ obtaining,
    \begin{align*}
    -2(1-c) P + P + W^\top Q + P + QW - 2Q &\preceq 0 \\
    \implies W^\top Q + QW  \preceq 2Q - 2cP &\prec 2Q.
    \end{align*}
    This proves that the $W - I_n$ is $Q-$LDS. 
    
    We construct a counterexample for~\ref{item:lds_ii}. Consider a skew-synaptic weight matrix $ W = \begin{bmatrix} 0 & 4 \\ -4 & 0 \end{bmatrix}$. Since $W + W^\top  = 0 \prec 2 I$, $W-I_n$  is LDS. For contraction in the norm $\norm{\cdot}{P}$, the matrix $M(D) = 2P - PDW - W^\top DP $ must be positive definite for all diagonal $D \in [0,I]$, e.g., see~\cite{VC-AG-AD-GR-FB:23c}.
    Testing the vertices $D_1 = \diag{1, 0}$ and $D_2 = \diag{0, 1}$,
    \begin{align*}
        M(D_1) &= \begin{bmatrix}
            2 p_{11} & 2p_{12} - 4 p_{11} \\
            2p_{12} - 4 p_{11} & 2p_{22} - 8p_{12} 
        \end{bmatrix},  \\
        M(D_2) &= \begin{bmatrix}
            2p_{11} - 8 p_{12} & 2p_{12} - 4 p_{22} \\
            2p_{12} - 4 p_{22} & 2 p_{22}
        \end{bmatrix} . 
    \end{align*}
    Since the determinants of these matrices need to be positive, 
    \begin{itemize}
        \item $p_{11} p_{22} > p_{12}^2 + 4p_{11}^2 \implies p_{22} > 4p_{11}$.
        \item $p_{11} p_{22} > p_{12}^2 + 4p_{22}^2 \implies p_{11} > 4p_{22}$.
    \end{itemize}
    These conditions are contradictory for any $P \succ 0$. Thus, $W-I_n$ is LDS 
    but not contractive.        
\end{proof}

Next, we show that restricting the matrix inequality~\eqref{eq:fr_cts_mone} to
symmetric synaptic matrices is equivalent to the known
optimal characterizations given in~\cite{VC-AG-AD-GR-FB:23c}.

\begin{pro}[Log-optimal characterization for symmetric weights]\label{prop:symm_matrix}
    Let $W = W^\top \in \R^{n \times n}$. 
    \begin{enumerate}
        \item For $\alpha(W) < 0$, the matrix inequality~\eqref{eq:fr_cts_mone} holds for ${P = - W}$, ${Q = I_n}$, and ${c = 1}$. 
        \item For $0 < \alpha(W) < 1$, there exists $P \succ 0$ such that $ W = P ^{1/2} - (4\alpha(W))^{-1}P$, and the matrix inequality~\eqref{eq:fr_cts_mone} holds for this  $P$, ${Q = 4\alpha(W)I_n}$, and $c = 1 - \alpha(W)$. 
    \end{enumerate}
\end{pro}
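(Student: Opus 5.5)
Both items come down to direct substitution into~\eqref{eq:fr_cts_mone}, once the matrices in each item have been constructed. Since $W$ is symmetric, $\alpha(W)$ is its largest eigenvalue. Write $W = U\Lambda U^\top$ with $U$ orthogonal and $\Lambda = \diag{\lambda_1,\dots,\lambda_n}$. Every candidate $P$ will be a spectral function of $W$, so it commutes with $W$.

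For item (i), $\alpha(W)<0$ gives $P=-W\succ 0$. Substituting $Q=I_n$ and $c=1$ into~\eqref{eq:fr_cts_mone} makes the $(1,1)$ block $-2(1-c)P=\0_{n\times n}$. The off-diagonal block becomes $P+QW = -W+W=\0_{n\times n}$. The matrix therefore reduces to $\diag{\0_{n\times n},-2I_n}\preceq 0$, which finishes this case.

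For item (ii), set $\alpha=\alpha(W)\in(0,1)$. I first build $P$ with $W = P^{1/2}-(4\alpha)^{-1}P$ eigenvalue by eigenvalue. For each $\lambda_i\le\alpha$ I need $s_i>0$ solving $s_i^2/(4\alpha) - s_i + \lambda_i = 0$. I take the larger root $s_i = 2\alpha + 2\sqrt{\alpha^2-\alpha\lambda_i}$, which is real because $\lambda_i\le\alpha$ and $\alpha>0$. It is strictly positive, at least $2\alpha$, even when $\lambda_i$ is negative; this is why the larger root is chosen. I then define $P = U\diag{s_1^2,\dots,s_n^2}U^\top\succ 0$, so that $P^{1/2}=U\diag{s_i}U^\top$ and the identity $W=P^{1/2}-P/(4\alpha)$ holds by construction.

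Next I substitute $Q=4\alpha I_n$ and $c=1-\alpha$. The $(1,1)$ block is $-2\alpha P$ and the $(2,2)$ block is $-8\alpha I_n$. The off-diagonal block is $P+4\alpha W = P + 4\alpha P^{1/2} - P = 4\alpha P^{1/2}$, which is symmetric. Since $-8\alpha I_n\prec 0$, the Schur complement criterion says the matrix is $\preceq 0$ if and only if
\begin{equation*}
-2\alpha P + (4\alpha P^{1/2})(8\alpha)^{-1}(4\alpha P^{1/2}) \preceq 0.
\end{equation*}
The left-hand side equals $-2\alpha P + 2\alpha P = \0_{n\times n}$, so the inequality holds.

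The only delicate step is the existence of $P$. The scalar map $s\mapsto s - s^2/(4\alpha)$ attains its maximum value $\alpha$ at $s=2\alpha$, so eigenvalues above $\alpha$ would have no preimage. Its decreasing branch on $[2\alpha,\infty)$ covers $(-\infty,\alpha]$, which is exactly the range needed, including negative eigenvalues.
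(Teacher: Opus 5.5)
Your proof is correct and follows the paper's argument. You substitute the given $P$, $Q$, $c$ directly into~\eqref{eq:fr_cts_mone} and use a Schur complement on the $(2,2)$ block for item (ii). The only difference is that you construct $P$ explicitly via the larger root of $s_i - s_i^2/(4\alpha) = \lambda_i$, whereas the paper invokes Lemma~10 of~\cite{VC-AG-AD-GR-FB:23c} for the existence of $P$, so your version is self-contained.
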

\begin{proof}
    Consider the Schur complement of~\eqref{eq:fr_cts_mone}. 
    \begin{align}
        -2(1 - c)P + \frac{1}{2}(P + W^\top Q)Q^{-1}(P + QW) \preceq 0.\label{eq:schur_fr_cts_mone}
    \end{align}
    The first statement follows from substituting $P = - W$, $Q = I_n$ and $c = 1$ into~\eqref{eq:schur_fr_cts_mone}. \\
From~\cite[Lemma 10]{VC-AG-AD-GR-FB:23c}, we may define $P \succ 0$ such that ${W = P^{1/2} - \frac{1}{4\alpha(W)}P}$. Choosing $Q = 4\alpha(W) I$, and substituting $W$ and $Q$ into the LHS of~\eqref{eq:schur_fr_cts_mone} yields ${-2(1 - c)P + 2\alpha(W)P}$, which is $0$ when $c = 1 - \alpha(W)$.
\end{proof}

\section{Applications to integral control and machine learning} \label{sec:applications}
\subsection{Parameterization}
In this section we study the parameterization of synaptic matrices satisfying the inequality~\eqref{eq:fr_cts_mone}; such a step enables applications in machine learning and data-driven control.
\begin{thm}[Parameterization of weight matrices]\label{thm:parameterization}
    The following  statements are equivalent:
    \begin{enumerate}
        \item The synaptic matrix $W$ satisfies condition~\eqref{eq:fr_cts_mone} for some $P \succ 0$, diagonal matrix $Q\succ 0$, and $c \in [0,1]$.
        \item $W$ can be written as
            \begin{align*}          
                W = 2\sqrt{1 {-} c} \; \diag{\e^d} S V^\top V  - \diag{\e^{2d}}(V^\top V )^2 
            \end{align*}
            where $S \in \R^{n\times n}$ satisfying $S^\top S \preceq I_n$, $V$ is a full rank matrix in $\R^{n\times n}$, and $d\in \R^n$. 
    \end{enumerate}
\end{thm}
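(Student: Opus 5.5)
The plan is to reduce condition~\eqref{eq:fr_cts_mone} to a norm bound via a Schur complement, and then read off the parameterization by solving that bound for $W$. Since the $(2,2)$ block $-2Q$ is negative definite, \eqref{eq:fr_cts_mone} is equivalent to the Schur complement inequality~\eqref{eq:schur_fr_cts_mone} already used in the proof of Proposition~\ref{prop:symm_matrix}. I will rewrite it as
\begin{equation*}
(P+QW)^\top Q^{-1}(P+QW) \preceq 4(1-c)P .
\end{equation*}
Introduce $X := Q^{-1/2}(P+QW)$, which is legitimate because $Q$ is positive diagonal. The condition then reads $X^\top X \preceq 4(1-c)P$.

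\emph{(i)$\Rightarrow$(ii).} Assume first that $c<1$.
\begin{itemize}
\item Define $S := \tfrac{1}{2\sqrt{1-c}}\,X P^{-1/2}$. Conjugating the inequality by $P^{-1/2}$ gives $S^\top S \preceq I_n$.
\item Solving for $W$ gives $QW = Q^{1/2}X - P$, so
\begin{equation*}
W = 2\sqrt{1-c}\,Q^{-1/2} S P^{1/2} - Q^{-1}P .
\end{equation*}
\item Set $d$ by $Q^{-1/2}=\diag{\e^{d}}$, i.e.\ $d_i = -\tfrac12\log q_{ii}$.
\item Choose $V := P^{1/4}$, which is full rank. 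Then $V^\top V = P^{1/2}$ and $(V^\top V)^2 = P$, and the displayed formula in (ii) follows.
\end{itemize}
If $c=1$, the inequality forces $X=0$. Then $W=-Q^{-1}P$, which is the formula with a vanishing first term, for any contraction $S$ (e.g.\ $S=0$).

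\emph{(ii)$\Rightarrow$(i).} Given $S$, $V$, $d$ and $c\in[0,1]$, reverse the construction.
\begin{itemize}
\item Set $P := (V^\top V)^2$. It is positive definite because $V$ is full rank, so $V^\top V\succ0$ and its square is also positive definite.
\item Set $Q := \diag{\e^{-2d}}$, which is positive diagonal.
\item Note that $V^\top V$ is symmetric positive definite, so it equals the unique positive square root $P^{1/2}$.
\item Then $X = Q^{-1/2}(P+QW) = 2\sqrt{1-c}\,S P^{1/2}$.
\item Hence $X^\top X = 4(1-c)P^{1/2}S^\top S P^{1/2} \preceq 4(1-c)P$, which is the Schur complement form and hence \eqref{eq:fr_cts_mone}.
\end{itemize}

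The main point needing care is the identification $V^\top V = P^{1/2}$. It must be the \emph{symmetric} square root, so that $(V^\top V)^2 = P$ and the factor $P^{1/2}$ appearing in $X$ coincide; uniqueness of the positive semidefinite square root settles this. The other point to handle is the degenerate case $c=1$ (and $c$ near $1$), where $S$ cannot be defined by division but the argument goes through as described above.
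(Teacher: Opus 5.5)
Your proof is correct and follows the paper's route: a Schur complement eliminating the $-2Q$ block reduces \eqref{eq:fr_cts_mone} to $(P+QW)^\top Q^{-1}(P+QW)\preceq 4(1-c)P$, and you then solve for $W$ with $Q^{-1/2}=\diag{\e^{d}}$ and $V^\top V=P^{1/2}$. The only difference is that you build the contraction explicitly as $S=\tfrac{1}{2\sqrt{1-c}}\,Q^{-1/2}(P+QW)P^{-1/2}$ instead of citing the Douglas--Fillmore--Williams lemma, which is what that lemma reduces to when $P^{1/2}$ is invertible; this is why you need the separate (correctly handled) case $c=1$, and your identification $V^\top V=P^{1/2}$ is the precise version of the paper's looser phrase that $P$ is parameterized as $V^\top V$.
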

\begin{proof}
    Consider the (1,1) Schur complement of the inequality~\eqref{eq:fr_cts_mone}. With some bookkeeping, we obtain,
    \begin{align}\label{eq:begin_param}
        (P + W^\top Q)Q^{-1} (P + QW) \prec 4(1-c)P.
    \end{align}
    From the Douglas-Fillmore-Williams Lemma~\cite[Theorem 8.6.2]{DSB:09}, the inequality~\eqref{eq:begin_param} is true if and only if there exists a matrix $S \in \R^{n\times n}$ satisfying $S^\top S \preceq I_n$ and
    \begin{align}
        Q^{-1/2} (P + QW) = 2\sqrt{(1-c)}\;SP^{1/2}.
    \end{align}
    Upon rearranging to solve for $W$, 
     \begin{align}
        W = 2\sqrt{1 - c} \; Q^{-1/2} S P^{1/2} - Q^{-1}P.
    \end{align}
    Now, the set of all valid $Q$ can be parameterized using $\diag{\e^{-2d}}$, and the
    set of all $P \succ 0$ can be parameterized as $V^\top V$ for full rank $V \in \R^{n\times n}$.
\end{proof}
\begin{remark}
In numerical applications, the set of all matrices $S$ such that $S^\top S \preceq I_n$ may be parameterized via a free variable $X \in \R^{n\times n}$ by setting $S=X(I+X^{\top}X)^{-1/2}$. Similarly, to ensure $V^\top V$ does not lose rank, we may approximate $V^\top V$ using a free matrix $Y \in \R^{n \times n}$ and a small constant $\epsilon > 0$ by setting $P = Y^\top Y + \epsilon I_n$.
\end{remark}

Theorem~\ref{thm:parameterization} provides a practical tool to design FRNNs that are contracting by construction. Having characterized the largest known set of synaptic matrices that guarantee contractivity for the continuous-time model~\eqref{eq:cts_firing_rate}, we now turn to utilizing these models in control and machine learning.

\subsection{LMI conditions for integral control} 

Theorem~\ref{thm:parameterization} enables the modeling of stable continuous-time systems using contracting FRNNs. Given such a well-behaved, data-driven model, a natural next step is to achieve robust reference tracking. In this section, we establish conditions that enable the use of low-gain integral control, deriving these conditions via a singular perturbation argument~\cite{JWSP:21}.

We treat the FRNN as fast dynamics, and the integrating controller as slow dynamics. For a constant reference $r \in \R^p$, consider the dynamics,
\begin{subequations}\label{eq:closed_loop_system}
\begin{align}
    \dot{x} &= -x + \Psi\left(Wx + Bu\right), \quad y = Cx \label{eq:contracting_FRNN} \\
    \dot{u} &= \varepsilon K(r - y) \label{eq:integral_controller}
\end{align}
\end{subequations}
where $x \in \R^{n}, u \in \R^m, y \in \R^p$, and all matrices are shaped appropriately. When~\eqref{eq:contracting_FRNN} is contracting, the fixed point map $\xstar(u)$ is well posed, and is implicitly defined by
\begin{align}
\xstar(u) &= \Psi\left( W \xstar(u) + B u\right). \label{eq:cl_fixed_point_equation} 
\end{align}
Further, the reduced order dynamics are given by
\begin{align} \label{eq:reduced_order_dynamics}
    \dot{u} &= K (r - C\xstar(u)).
\end{align}

Before presenting our main result, we introduce an incremental matrix multiplier for this fixed point map from~\cite{LDA-MC:13}.
\begin{lemma}\label{lem:implicit_imm}
    Let the map $\map{\Psi}{\R^n}{\R^n}$ admit an incremental matrix multiplier $M$. Assume the implicit definition of the map $\xstar(u)$ in equation~\eqref{eq:cl_fixed_point_equation} is well posed. Then the map $\xstar(u)$ admits the incremental  multiplier matrix
    \begin{align}
        \begin{bmatrix}
            B & W \\ 0 & I_n 
        \end{bmatrix}^\top
        M
        \begin{bmatrix}
            B & W \\ 0 & I_n 
        \end{bmatrix} 
    \end{align}
\end{lemma}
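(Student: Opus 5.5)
The plan is to verify the incremental multiplier inequality~\eqref{eq:imm-d} directly for the map $u \mapsto \xstar(u)$, using only the fixed point identity~\eqref{eq:cl_fixed_point_equation} and the multiplier property of $\Psi$. Fix two inputs $u_1, u_2 \in \R^m$ and write $x_i = \xstar(u_i)$, which are well defined by the well-posedness assumption. Introduce the pre-activation arguments $z_i = W x_i + B u_i \in \R^n$. By~\eqref{eq:cl_fixed_point_equation}, $x_i = \Psi(z_i)$.

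Next, apply the multiplier $M$ of $\Psi$ to the pair $(z_1, z_2)$. This gives
\begin{equation*}
\begin{bmatrix} z_1 - z_2 \\ \Psi(z_1) - \Psi(z_2) \end{bmatrix}^\top M \begin{bmatrix} z_1 - z_2 \\ \Psi(z_1) - \Psi(z_2) \end{bmatrix} \ge 0 .
\end{equation*}
Set $\Delta u = u_1 - u_2$ and $\Delta x = x_1 - x_2$. The fixed point relations then give the linear identity
\begin{equation*}
\begin{bmatrix} z_1 - z_2 \\ \Psi(z_1)-\Psi(z_2) \end{bmatrix} = \begin{bmatrix} W\Delta x + B\Delta u \\ \Delta x \end{bmatrix} = \begin{bmatrix} B & W \\ 0 & I_n \end{bmatrix} \begin{bmatrix} \Delta u \\ \Delta x \end{bmatrix}.
\end{equation*}
This is the same congruence trick used in the proof of Lemma~\ref{lem:absolute_contractivity_lure} with $\Gamma$. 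Substituting it into the inequality above gives exactly~\eqref{eq:imm-d} for the map $\xstar$, with multiplier $\begin{bmatrix} B & W \\ 0 & I_n\end{bmatrix}^\top M \begin{bmatrix} B & W \\ 0 & I_n\end{bmatrix}$ acting on $(\Delta u, \Delta x) = (\Delta u, \xstar(u_1) - \xstar(u_2))$.

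There is no real obstacle here. The only point needing care is that the unknown $\xstar(u)$ appears on both sides of the implicit equation. This is handled by working with the pre-activation variables $z_i$, so that the multiplier of $\Psi$ is applied to genuine points of its domain, and well-posedness guarantees these points exist. One should also check the dimensions: the multiplier lives in $\R^{(m+n)\times(m+n)}$ because the input dimension is $m$ and the output dimension is $n$, which matches Definition of the incremental multiplier matrix.
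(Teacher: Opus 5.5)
Your proof is correct. The paper states this lemma without proof and cites \cite{LDA-MC:13}, but your argument is the standard one behind it: you apply the multiplier of $\Psi$ at the pre-activations $z_i = W\xstar(u_i) + Bu_i$ and substitute the fixed-point identity through the linear map $\begin{bmatrix} B & W \\ 0 & I_n \end{bmatrix}$, which is the same congruence step used with $\Gamma$ in the proof of Lemma~\ref{lem:absolute_contractivity_lure}.
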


Next, we derive conditions guaranteeing that the reduced-order dynamics~\eqref{eq:reduced_order_dynamics} are contracting. Combined with the contractivity of the open-loop plant, these conditions ensure that the closed-loop system achieves asymptotic reference tracking via the low-gain integral control analysis of~\cite{JWSP:21}.

\begin{thm}[Contractivity of the reduced dynamics and reference tracking via integral control]\label{thm:dc_gain}
Consider the system~\eqref{eq:closed_loop_system}. Let $A = I_n - W$, and $Q\in \R^{n\times n}$ be a diagonal positive matrix. Assume:
\begin{enumerate}[label=\textup{(A\arabic*)},leftmargin=*]
    \item $\Psi$ is slope-restricted to $[\delta, 1]$.
    \item  There exists $\ustar$ such that $r = C\xstar(\ustar)$, and $\xstar(\ustar) = \Psi(W\xstar(\ustar) + B\ustar)$.
    \item With the shorthands ${Z = B^\top Q ((1 - \delta)I_n  + 2\delta A)}$ and ${R = 2\delta A^\top Q A + (1 - \delta) (QA + A^\top Q)}$, there exist matrices $Y$ and $P\succ 0$ such that
    \begin{align}\label{eq:dc_gain_contraction_lmi}
     \begin{bmatrix}
        2c_rP - 2 \delta B^\top QB &  Z -YC \\    * & -R
      \end{bmatrix}  \preceq 0.
    \end{align}
\end{enumerate}
Then, with $K = P^{-1}Y$, 
\begin{enumerate}
    \item the reduced order dynamics~\eqref{eq:reduced_order_dynamics} are strongly infinitesimally contracting with rate $c_r$ with respect to the $\norm{\cdot}{P}$, and
    \item if the plant dynamics~\eqref{eq:contracting_FRNN} are strongly infinitesimally contracting, then there exists $\varepsilon^* > 0$ such that for all $\varepsilon \in (0, \varepsilon^*)$, the closed-loop system~\eqref{eq:closed_loop_system} is asymptotically stable and achieves reference tracking.
    \end{enumerate}
\end{thm}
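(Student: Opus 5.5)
The plan is to prove statement (i) as an S-lemma argument in the spirit of Lemma~\ref{lem:absolute_contractivity_lure}, with the incremental multiplier of the implicit map $\xstar(\cdot)$ from Lemma~\ref{lem:implicit_imm} replacing that of $\Psi$. Statement (ii) will then follow from the singular perturbation result of~\cite{JWSP:21}.

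\emph{Contraction condition.} Fix $u_1,u_2\in\R^m$ and write $\Delta u=u_1-u_2$ and $\Delta x=\xstar(u_1)-\xstar(u_2)$. With $K=P^{-1}Y$, i.e.\ $PK=Y$, the reduced vector field~\eqref{eq:reduced_order_dynamics} has increment $-KC\Delta x$. Condition~\eqref{eq:contractivity_condition} with rate $c_r$ is therefore
\begin{equation*}
\begin{bmatrix}\Delta u\\ \Delta x\end{bmatrix}^\top
\begin{bmatrix} 2c_rP & -YC\\ -C^\top Y^\top & \0_{n\times n}\end{bmatrix}
\begin{bmatrix}\Delta u\\ \Delta x\end{bmatrix}\le 0 .
\end{equation*}
This is linear in $(P,Y)$, which is why the substitution $Y=PK$ is made.

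\emph{Multiplier for $\xstar$.} By Lemma~\ref{lem:IMM_conditions} with $k_1=\delta$ and $k_2=1$, $\Psi$ admits
\begin{equation*}
M=\begin{bmatrix}-2\delta Q&(1+\delta)Q\\(1+\delta)Q&-2Q\end{bmatrix}.
\end{equation*}
By Lemma~\ref{lem:implicit_imm}, the map $\xstar$ (well posed under the standing contractivity of~\eqref{eq:contracting_FRNN}) admits $\Gamma^\top M\Gamma$ with $\Gamma=\begin{bmatrix}B&W\\0&I_n\end{bmatrix}$.

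I would expand this product block by block, substituting $W=I_n-A$:
\begin{itemize}
\item The $(1,1)$ block is $-2\delta B^\top QB$.
\item The $(1,2)$ block is $-2\delta B^\top QW+(1+\delta)B^\top Q=B^\top Q\bigl((1-\delta)I_n+2\delta A\bigr)=Z$.
\item The $(2,2)$ block is $-2\delta W^\top QW+(1+\delta)(QW+W^\top Q)-2Q$. Expanding in $A$, the $Q$ terms cancel and it collapses to $-2\delta A^\top QA-(1-\delta)(QA+A^\top Q)=-R$.
\end{itemize}
Hence~\eqref{eq:dc_gain_contraction_lmi} is exactly the contraction matrix above plus $\Gamma^\top M\Gamma$, with S-lemma multiplier $\lambda=1$; no generality is lost because $\lambda$ can be absorbed into the free diagonal $Q$. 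The S-lemma step of Lemma~\ref{lem:absolute_contractivity_lure} then gives contractivity of~\eqref{eq:reduced_order_dynamics} with rate $c_r$ in $\norm{\cdot}{P}$, which is (i). This bookkeeping is routine; getting the factor-of-two conventions consistent between the quadratic forms is the only care needed.

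\emph{Tracking via singular perturbation.} For (ii), assumption (A2) makes $\ustar$ an equilibrium of~\eqref{eq:reduced_order_dynamics}, and it is unique and globally exponentially stable by (i). The pair $(\xstar(\ustar),\ustar)$ is an equilibrium of~\eqref{eq:closed_loop_system} with $y=C\xstar(\ustar)=r$. I would then invoke the low-gain integral control theorem of~\cite{JWSP:21}. Its hypotheses are:
\begin{itemize}
\item the fast system~\eqref{eq:contracting_FRNN} is contracting uniformly in $u$ (assumed);
\item the slow reduced system is contracting (by (i));
\item the vector fields and the map $\xstar$ are globally Lipschitz.
\end{itemize}
The Lipschitz property of $\xstar$ follows from contractivity of the fast system together with $\Psi$ being $1$-Lipschitz. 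The theorem then yields $\varepsilon^*>0$ such that for $\varepsilon\in(0,\varepsilon^*)$ the closed loop is (exponentially) stable at this equilibrium, hence $y(t)\to r$.

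I expect the main obstacle to be matching the exact hypotheses and norms of~\cite{JWSP:21}. In particular, one must check that the contraction rate of~\eqref{eq:contracting_FRNN} and the Lipschitz constants of the coupling are uniform in $u$, since $u$ enters inside $\Psi$. I would bound these explicitly using $\norm{B}{}$ and the contraction rate of~\eqref{eq:contracting_FRNN}.
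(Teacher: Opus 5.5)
Your proposal is correct and follows the paper's proof essentially step for step. You treat~\eqref{eq:reduced_order_dynamics} as a Lur'e system with nonlinearity $\xstar$, combine the $[\delta,1]$ multiplier of Lemma~\ref{lem:IMM_conditions} with Lemma~\ref{lem:implicit_imm} and the S-lemma argument of Lemma~\ref{lem:absolute_contractivity_lure}, rewrite the blocks via $W = I_n - A$ and $Y = PK$, and obtain (ii) from~\cite{JWSP:21}; your bookkeeping (the $(2,1)$ block $-C^\top Y^\top$, the $n\times n$ zero block, and absorbing $\lambda$ into $Q$) is if anything more careful than the paper's write-up.
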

\begin{proof}
    We begin by noting that the reduced order dynamics~\eqref{eq:reduced_order_dynamics} form a Lur'e system, where the nonlinearity is given by $\xstar(u)$. We may use Lemma~\ref{lem:implicit_imm} in conjuction with Lemma~\ref{lem:absolute_contractivity_lure}, to obtain the following LMI condition for contractivity in a Euclidean norm weighted by $P$. 
    \begin{align}\label{eq:int_condition_step_1}
     \begin{bmatrix}
        2c_rP & -PKC \\    -CKP & \0_{m\times{m}}
      \end{bmatrix} + \lambda
       \begin{bmatrix}
            B & W \\ 0 & I_n 
        \end{bmatrix}^\top
        M 
        \begin{bmatrix}
            B & W \\ 0 & I_n 
        \end{bmatrix} \preceq 0.
    \end{align}
    where $M$ is the matrix multiplier for the nonlinearity $\Psi$. Using Lemma~\ref{lem:IMM_conditions}, we may substitute $M$ as $\begin{bmatrix}
            -2\delta Q &  (1 + \delta) Q \\
             (1 + \delta) Q & -2Q
        \end{bmatrix}$. The second term in~\eqref{eq:int_condition_step_1} resolves to
    \begin{align*}
        \begin{bmatrix}
         - 2 \delta B^\top QB &  - 2\delta B^\top QW + (1 + \delta) B^\top Q  \\    * &  (1 + \delta) (W^\top Q + QW) - 2Q - 2\delta W^\top QW 
      \end{bmatrix}
    \end{align*}
    Now, let us set $A = I_n - W$. First, note that 
    \begin{align*}
        - R &=- 2\delta W^\top QW + (1 + \delta) (W^\top Q + QW) - 2Q,  \\
        Z &= - 2\delta B^\top QW + (1 + \delta) B^\top Q.
    \end{align*}
    Now, we may rewrite~\eqref{eq:int_condition_step_1} using these simplifications. 
    \begin{align}\label{eq:dc_gain_contraction_lmi_prep}
     \begin{bmatrix}
        2c_rP - 2 \delta B^\top QB &  Z -PKC \\    * & -R
      \end{bmatrix}  \preceq 0.
    \end{align}
    Reparameterizing $Y = PK$ we get the condition~\eqref{eq:dc_gain_contraction_lmi}.

Regarding the second statement, when the plant and the reduced order dynamics are globally strongly contracting, all of the assumptions in~\cite[Theorem 3.1]{JWSP:21} are met and therefore, for sufficiently small $\varepsilon$, the closed-loop system is asymptotically stable. 
\end{proof}    

\begin{cor}\label{cor:linear_dc_gain}
Under the assumptions of Theorem \ref{thm:dc_gain}, 
\begin{align}
    P (KCA^{-1}B) + (KCA^{-1}B)^\top P \succ 2c_r P, 
\end{align}
i.e., the matrix $- KCA^{-1}B$ is Hurwitz, recovering the standard DC-gain condition for the linear case~\cite{JWSP:21}.
\end{cor}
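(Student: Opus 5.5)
The plan is to test the LMI~\eqref{eq:dc_gain_contraction_lmi} against the subspace on which the fixed-point map $\xstar(u)$ would lie if $\Psi$ were linear. In the linear case the fixed-point equation~\eqref{eq:cl_fixed_point_equation} reduces to $x = Wx + Bu$, i.e.\ $A x = B u$ with $A = I_n - W$. Increments therefore satisfy $\Delta x = A^{-1}B\,\Delta u$. I will pre- and post-multiply~\eqref{eq:dc_gain_contraction_lmi} by the congruence matrix $T = \begin{bmatrix} I_m \\ A^{-1}B\end{bmatrix}$ and its transpose. The result is an $m\times m$ inequality that should collapse to the claimed DC-gain inequality.

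First I need $A$ to be invertible. Assume the plant~\eqref{eq:contracting_FRNN} satisfies~\eqref{eq:fr_cts_mone}, as in the second part of Theorem~\ref{thm:dc_gain}. Then Proposition~\ref{prop:LDS}\ref{item:lds_i} gives $Q'(W-I_n) + (W-I_n)^\top Q' \prec 0$ for a positive diagonal $Q'$. Hence $W - I_n$ is Hurwitz, so $A$ is nonsingular.

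Next comes the bookkeeping. Write $G = A^{-1}B$. The congruence gives
\begin{equation*}
2c_rP - 2\delta B^\top Q B + (Z - YC)G + G^\top (Z-YC)^\top - G^\top R G \preceq 0 .
\end{equation*}
Using $ZG = (1-\delta)B^\top Q A^{-1}B + 2\delta B^\top Q B$, the symmetrized $Z$-terms contribute
\begin{equation*}
(1-\delta)\bigl(B^\top Q A^{-1}B + B^\top A^{-\top} Q B\bigr) + 4\delta B^\top QB .
\end{equation*}
Using $A G = B$, the $R$-term equals
\begin{equation*}
G^\top R G = 2\delta B^\top Q B + (1-\delta)\bigl(B^\top A^{-\top} Q B + B^\top Q A^{-1} B\bigr).
\end{equation*}
Adding everything, the $B^\top Q B$ coefficients sum to $-2\delta + 4\delta - 2\delta = 0$, and the $(1-\delta)$ cross terms cancel exactly. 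What remains is $2c_r P - YCG - G^\top C^\top Y^\top \preceq 0$. Substituting $Y = PK$ gives
\begin{equation*}
P(KCA^{-1}B) + (KCA^{-1}B)^\top P \succeq 2c_r P .
\end{equation*}
This inequality is strict whenever the LMI~\eqref{eq:dc_gain_contraction_lmi} is strict, since $T$ has full column rank.

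Finally, since $c_r > 0$ and $P \succ 0$, this is a Lyapunov inequality for $-KCA^{-1}B$ with Lyapunov matrix $P$. Hence $-KCA^{-1}B$ is Hurwitz, and indeed all its eigenvalues have real part at most $-c_r$.

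The main obstacle is the algebraic cancellation in the congruence. One must track that the specific shorthands $Z$ and $R$ were built from the $[\delta,1]$ multiplier in exactly the way that makes the $\delta$- and $(1-\delta)$-dependent terms vanish on the subspace $\Delta x = A^{-1}B\Delta u$. A second point of care is the invertibility of $A$, which I draw from the plant's contractivity via Proposition~\ref{prop:LDS}.
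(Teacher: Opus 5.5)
This is correct and is essentially the paper's proof: your congruence by $T=[\,I_m;\ A^{-1}B\,]$ is exactly the composition of the paper's two congruences (first by $\diag{I_m, A^{-1}B}$, then by $[I_m,\ I_m]^\top$). It gives the same cancellations and the same non-strict bound $\succeq 2c_rP$, which suffices for $-KCA^{-1}B$ to be Hurwitz but, as you note, yields the literal strict inequality only under a strict LMI; the paper's proof has the same feature.

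Your argument for the invertibility of $A$ fills a step the paper leaves tacit, and some extra hypothesis of this kind is genuinely needed. Contractivity of the plant alone does not force it: the scalar plant with $W=B=C=1$, $\Psi(z)=\tfrac34 z$, $\delta=\tfrac12$ meets all hypotheses of Theorem~\ref{thm:dc_gain} yet has $A=0$. Just be explicit that requiring \eqref{eq:fr_cts_mone} is your added assumption rather than what that theorem states.
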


\begin{proof}
Consider the condition \eqref{eq:dc_gain_contraction_lmi}. Pre and post multiplying both sides by $\diag{I_m, (A^{-1}B)^\top}$ and $\diag{I_m, (A^{-1}B)}$ respectively, we obtain, 
\begin{align}\label{eq:dc_gain_contraction_lmi_transformed}
     \begin{bmatrix}
        2c_rP - 2 \delta B^\top QB &  ZA^{-1}B -YCA^{-1}B \\   (A^{-1}B)^\top (Z -YC)^\top & -(A^{-1}B)^\top RA^{-1}B
      \end{bmatrix}  \preceq 0.
\end{align}
We evaluate each term. First, we note the following, 
\begin{align}
    (A^{-1}B)^\top RA^{-1}B &= (1 - \delta) B^\top ((A^\top)^{-1}Q +  QA^{-1})B \nonumber\\    
    & + 2\delta B^\top Q B  \\
    ZA^{-1}B &= B^\top Q ((1 - \delta)A^{-1}  + 2\delta I) B
\end{align}
Upon making these substitutions, we further pre and post multiply~\eqref{eq:dc_gain_contraction_lmi_transformed} by the block rectangular matrix, $T_2 = [I_m, I_m]$, and its transpose. Upon simplifying, we obtain, 
\begin{align}
    2c_rP  
    -YCA^{-1}B - (YCA^{-1}B)^\top
    \preceq 0.
\end{align}
Upon substituting $Y = PK$, the result follows. 
\end{proof}

For numerical validation, we apply our controller synthesis on a normalized version of the two tank benchmark system~\cite{MS-JPN:17}. We utilize the parameterization in Theorem~\ref{thm:parameterization} to perform system identification. We fit the plant to an FRNN with $n = 8$ neurons and $\tanh{(\cdot)}$ activations to enforce nonzero $\delta$ condition. We solve for the DC gain $K$ using the LMI condition in Theorem~\ref{thm:dc_gain}. As shown in Figure~\ref{fig:data_driven_control}, our controller successfully performs reference tracking. 
\begin{figure}
    \centering
    \includegraphics[width=0.97\linewidth]{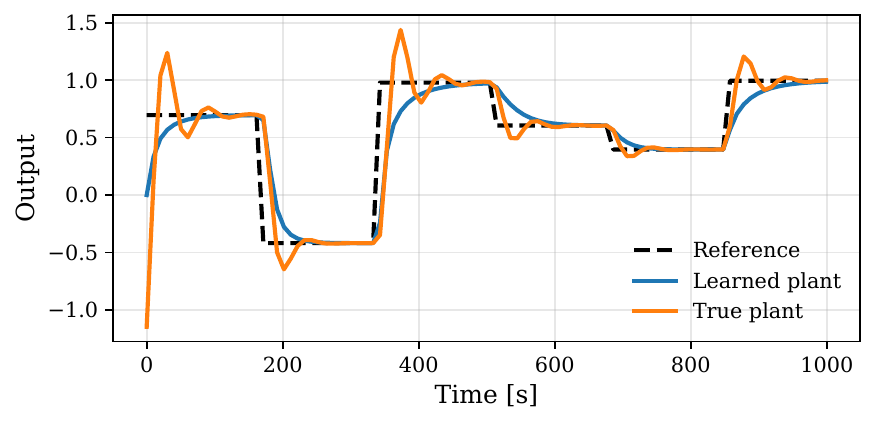}
    \caption{We perform system identification on a standard two-tank system, using an FRNN parameterized via Theorem~\ref{thm:parameterization}. We design the gain via a low-gain controller as described in Theorem~\ref{thm:dc_gain}.}
    \label{fig:data_driven_control}
\end{figure}

\subsection{Applications to implicit neural networks} \label{sec:INN}

Recent literature~\cite{JL-LD-SO-WY:25} establishes that the representational capacity of implicit models is fundamentally enhanced when the equilibrium mapping is allowed to exhibit local, rather than global, Lipschitz continuity. Motivated by this insight, we design an implicit neural network with input-dependent synaptic weights and biases:
\begin{align}
    \xstar(u) = \Psi(W(u)\xstar(u) + B(u)).
\end{align}
By allowing $W$ and $B$ to depend on the input $u$, the equilibrium $\xstar(u)$ becomes locally Lipschitz in $u$ (instead of globally Lipschitz), enabling the model to adapt its internal dynamics to each input and thereby improving expressivity. Concretely, we parameterize $W$ according to Theorem~\ref{thm:parameterization}, replacing each free variable with a learned neural network, and similarly implement $B$ as a neural network. This input-dependent network architecture natively allows $\xstar(u)$ to be locally Lipschitz, while remaining mathematically guaranteed to be contracting. We validate this approach on the MNIST and CIFAR-10 image classification benchmarks. As shown in Table~\ref{tab:image_classification}, our model achieves competitive accuracy while remaining parameter efficient, due to a higher expressivity. Full implementation details and code are provided in our repository.\footnote{\url{https://github.com/AnandGokhale/Contractivity_Neural_Networks}}

\begin{table}[htbp]
    \centering
    \caption{Comparative Performance on Image Classification Benchmarks}
    \label{tab:image_classification}
    \begin{tabular}{lcc}
        \toprule
        \textbf{Method} & \textbf{Model size} & \textbf{Acc.} \\
        \midrule
        \multicolumn{3}{c}{\textbf{MNIST}} \\
        \midrule
        LBEN~\cite{MR-RW-IRM:20} & -- & 98.2\% \\
        monDEQ~\cite{EW-JZK:20}  & 84K & 99.1$\pm$0.1\% \\
        \textbf{Ours} & \textbf{89K} & \textbf{99.33\%} \\
        \midrule
        \multicolumn{3}{c}{\textbf{CIFAR-10}} \\
        \midrule
        LBEN~\cite{MR-RW-IRM:20} & -- & 71.6\% \\
        monDEQ~\cite{EW-JZK:20}  & 172K & 74.0$\pm$0.1\% \\
        monDEQ$^*$~\cite{EW-JZK:20}  & 854K & 82.0$\pm$0.3\% \\
        \textbf{Ours} & \textbf{134K} & \textbf{78.27\%} \\
        \textbf{Ours}$^*$ & \textbf{134K} & \textbf{82.30\%} \\
        \bottomrule
    \end{tabular}
    
    \vspace{2pt}
    \raggedright \footnotesize $^*$ indicates models trained with data augmentation.
\end{table}

\section{Conclusion}

In this paper, we established sharp LMI conditions characterizing synaptic weights that guarantee absolute contractivity for firing-rate and Hopfield neural networks. We provide a detailed analysis of these LMI conditions, including relationships under different classes of nonlinearities. We also show the relationship of these conditions with prior conditions, showing that our conditions are tight when the synaptic matrix is symmetric. By translating our continuous-time MONE condition into an exact algebraic parameterization, we bridge the gap between stability analysis and practical synthesis. We demonstrated the utility of this parameterization through two distinct applications: a novel LMI-based design for low-gain integral controllers capable of robust reference tracking, and an input-dependent implicit neural network architecture that safely increases the expressivity of implicit models, by allowing the output of an implicit layer to be locally Lipschitz in the input.

Future research will proceed along several avenues. First, an interesting problem is to extend our integral control gain design to more general classes of neural networks such as RENs~\cite{DM-CLG-IRM-LF-GFT:23}. Second, one could also study mechanisms to stabilize FRNNs using output feedback~\cite{MG-MJ-ST:24}. Finally, we seek to extend our analysis to network problems, both in the context of graph neural networks, and in the context of distributed control.

\bibliographystyle{plainurl+isbn}
\bibliography{alias, Main, FB}

\end{document}